\newtheorem{definition}{Definition}[section]
\newtheorem{proposition}{Proposition}[section]
\newcommand{\la}{\lambda}
\newcommand{\s}{\sigma}
\newcommand{\R}{\mathbb{R}}
\newcommand{\W}{\mathcal{W}}
\def\l{\langle}
\def\r{\rangle}
\newcommand{\weight}{w}
\newcommand{\estweight}{\widehat{w}}
\newcommand{\fspace}{\mathcal{H}}
\newcommand{\fmap}{\psi}
\newcommand{\fdim}{M}
\newcommand{\sysop}{\mathcal{A}}
\newcommand{\estsysop}{\widehat{A}}
\newcommand{\controlop}{\mathcal{B}}
\newcommand{\estcontrolop}{\widehat{B}}
\newcommand{\measop}{\mathcal{K}}
\newcommand{\data}{X}
\newcommand{\centers}{C}
\newcommand{\centerscontrol}{D}
\newcommand{\control}{\delta}
\newcommand{\kernel}{k}
\newcommand{\eval}{\la}
\newcommand{\empK}{\ensuremath{K}}
\newcommand{\dom}{\Omega}
\newcommand{\sampSet}{\mathcal{X}}
\newcommand{\sampSetLong}{\{x_1, \dots, x_{\nsamp}\}}
\newcommand{\nsamp}{N}
\newcommand{\ncent}{M}
\newcommand{\shCent}{\mathcal{C}}
\newcommand{\shCentLong}{\{c_1,\dots,c_{\ncent}\}}
\newcommand{\eqlabel}[1]{\label{eq:#1}}
\renewcommand{\eqref}[1]{(\ref{eq:#1})}
\def\l{\langle}
\def\r{\rangle}
\DeclareMathOperator{\diag}{diag}
\DeclareMathOperator{\Span}{span}
\DeclareMathOperator{\Rank}{rank}
\newcommand{\Kiprod}[2]{\langle \fmap(#1),\fmap(#2) \rangle_{\fspace}}
\newcommand{\Obs}{\mathcal{O}}
\newcommand{\otime}{L}
\newcommand{\JorMul}{O}
\newcommand{\Tset}{\Upsilon}
\newcommand{\Ind}{\mathcal{I}}
\newcommand{\JorP}{P}
\newcommand{\JorLa}{\Lambda}
\newcommand{\empKMod}{\widetilde{\empK}}
\newcommand{\empKBlock}{\widehat{\mathbf{K}}}
\newcommand{\ABlock}{\widehat{\mathbf{A}}}
\newcommand{\MBlock}{M}
\newcommand{\ITBlock}{I_2}
\newcommand{\ZBlock}{0}
\newcommand{\fspaceC}{\ensuremath{\fspace_{\centers}}}
\newcommand{\fspaceD}{\ensuremath{\fspace_{\centerscontrol}}}
\newcommand{\Atoms}{\mathcal{F}_{\centers}}
\newcommand{\AtomsControl}{\mathcal{F}_{\centerscontrol}}
\newcommand{\minmeas}{l}
\newcommand{\weightmap}{\ensuremath{\mathcal{W}}}
\newcommand{\weightmapI}{\ensuremath{\mathcal{W}^{-1}}}
\newcommand{\weightmapC}{\ensuremath{\acute{\mathcal{W}}}}
\newcommand{\ncontrol}{\ell}
\newcommand{\nevals}{r}
\newcommand{\evalvec}{\boldsymbol{\eval}}
\newcommand{\geomMult}{\gamma}
\newcommand{\weightc}{\acute{\weight}}
\newcommand{\empKD}{\empK_{\centerscontrol}}
\newcommand{\empKCD}{\empK_{\centers\centerscontrol}}
\newcommand{\empKShadFull}{\mathbf{\empK}}
\newcommand{\ControlMat}{\Psi}
\newcommand{\controlCent}{\mathcal{D}}
\newcommand{\controlCentLong}{\{d_1,\dots,d_{\ncontrol}\}}
\title{
Kernel Controllers: A Systems-Theoretic Approach for Data-Driven Modeling and Control of Spatiotemporally Evolving Processes 
}
\author{Hassan A. Kingravi, Harshal Maske and Girish Chowdhary
%\thanks{This work was not supported by any organization}% <-this % stops a space
\thanks{ This work was supported in parts by DOE Award Number DE-FE0012173 and AFOSR Award Number 
FA9550-14-1-0399. Hassan Kingravi is with Pindrop Security, Harshal Maske, and Girish Chowdhary are with the Distributed Autonomous Systems (DAS) laboratory Oklahoma State University,{\tt \{hkingravi@pindropsecurity.com,  maske@okstate.edu, girish.chowdhary@okstate.edu\}}}%
%\thanks{Girish Chowdhary is with the Department of Electrical Engineering, Wright State University,
%        Dayton, OH 45435, USA
%        {\tt\small pmisra@cs.wright.edu}}%
}
\begin{document}

\maketitle

\begin{abstract}
We consider the problem of modeling, estimating, and controlling the latent state of a spatiotemporally evolving continuous function using very few sensor measurements and actuator locations. Our solution to the problem consists of two parts: a predictive model of functional evolution, and feedback based estimator and controllers that can robustly recover the state of the model and drive it to a desired function. We show that layering a dynamical systems prior over temporal evolution of weights of a kernel model is a valid approach to spatiotemporal modeling that leads to systems theoretic, control-usable, predictive models. We provide sufficient conditions on the number of sensors and actuators required to guarantee observability and controllability. The approach is validated on a large real dataset, and in simulation for the control of spatiotemporally evolving function.
\end{abstract}

\section{Introduction} \label{sec:intro}

Modeling, control, and estimation of spatiotemporally varying systems is a challenging area in controls research. These systems are characterized by dynamic evolution in both the spatial and temporal variables. Some examples of relevant problems include active wing-shaping based control of flexible aircraft, control of heat or particulate diffusion in manufacturing processes, control of rumor spreading across a social network, and tactical asset allocation and control problems in dynamically varying battlespaces. The traditional approach to modeling and control of spatiotemporal systems have relied on Partial Differential Equations (PDEs) \cite{Alabau2012controlpde}, solutions to which are functions that evolve in both space and time. However, PDE models can be limited in situations where exact physics based models of the functional evolution are difficult to formulate, or are inherently limited due to the physical understanding of the process or unknown spatiotemporal interactions \cite{cressie2011statistics}. Furthermore, the control of PDEs is fundamentally more challenging than the control of finite-dimensional state-space systems because the evolution and control spaces are infinite dimensional Hilbert spaces, as opposed to $\mathbb{R}^n$ \cite{Alabau2012controlpde}.

Accordingly, there has been significant work in approximate modeling of spatiotemporally evolving functions using data-driven or distributed parameter based approximations of PDEs \cite{cressie2011statistics,wikle2001kernel}.   
One way to model spatiotemporally evolving functions is to approximate the function at several sampling locations and build an autoregressive model of the evolution of the function's output over that grid \cite{baker2000finite}. The fidelity of these models heavily depends on the number of sampling (equivalently Euclidean grid locations in the independent variable space) locations employed, with a large number of grid locations leading to large-scale state-space models that are difficult to manage. An alternative approach to modeling spatiotemporal functional evolution relies on modeling the correlation between any two sampling locations through a smooth covariance kernel \cite{cressie2011statistics}. The model of the evolution is then formed through a linear, weighted combination of the kernels, and the hyperparameters of the spatiotemporal covariance kernel and the weights are learned by solving an optimization problem. The power and flexibility of this approach lies in the fact that kernels can be defined over abstract objects, and not just Euclidean grid locations, leading to a modeling technique that is domain agnostic. For example, kernel embeddings are available for graphical models studied in decentralized control \cite{johansson2014global}, images \cite{ren2012coupled}, and many other domains. However, formulating control-usable kernel-based models of spatiotemporal phenomena can be challenging due to the need to take into account the spatiotemporal dependence. Many recent techniques in spatiotemporal modeling have focused on covariance kernel design and associated hyperparameter learning algorithms \cite{garg2012AAAI,ma2003nonstationary,RasmussenWilliams2005,plagemann2008nonstationary}. The main benefit of careful design of covariance kernels over approaches that simply include time in as an additional input variable \cite{perez:13:gaussian,Chowdhary13_ACC2} is that they can account for intricate spatiotemopral couplings. However, there are two key challenges with these approaches: the first challenge is in ensuring the scalability of the model to large scale phenomena. This is difficult due to the fact that the hyperparameter optimization problem is not convex in general, and because when time is used as a kernel input, it is nontrivial to restrict the number of kernels used without losing modeling fidelity \cite{garg2012AAAI,ma2003nonstationary,plagemann2008nonstationary}. %, and as such, this is not where the main contribution of the paper lies. 
The second very important challenge is concerned with the formulation of feasible control strategies utilizing predictive kernel-based models of spatiotemporal phenomena. In particular, when the spatiotemporal evolution is embedded in the design of complex covariance kernel, the resulting model of functional evolution can be highly nonlinear and difficult to utilize in control design. 

In this paper, we pursue an alternative systems-theoretic approach to the modeling, control, and estimation of spatiotemporally varying functions that fuses the strengths of kernel methods with systems theory. Our main contribution is to provide a systems-theoretic formulation for approximating, with very high accuracy, spatiotemporal functional evolution by layering a linear dynamical systems prior over temporal evolution of weights of a kernel model. For a class of linearly evolving PDEs, such as the heat diffusion and the wave equation, our approach can lead to a very high-accuracy approximation. This modeling approach is also applicable to data-driven modeling of real-world phenomena, which we demonstrate on a challenging inference problem on satellite data of sea surface temperatures. 
One benefit of our model is that it can encode spatiotemporal evolution of complex nonlinear surfaces through an Ordinary Differential Equation (ODE) evolving in a Hilbert space induced by the specific kernel choice.
Yet, the main benefit of our systems-theoretic approach is that it is highly conducive to control synthesis. To illustrate this fact, we demonstrate that feasible control strategies for a class of spatiotemporally evolving systems can be found using linear control synthesis. In particular, we derive sufficient conditions on the kernel selection to guarantee observability and controllability of the presented model. Furthermore, we demonstrate control synthesis for a diffusion PDE using simple Gaussian kernels distributed uniformly in the input domain.  

The outline of this paper is as follows, Section \ref{sec:observers} focuses on the development of a systems-theoretic kernel-based model of spatiotemporal evolution, Section \ref{sec:theory_results} presents the main theoretical results, Section \ref{sec:experimental_results} presents modeling results on a real-world large dataset and control synthesis results for a diffusion PDE.

\section{Kernel Controllers}\label{sec:observers}
This section outlines our modeling framework and presents theoretical results associated with the number of sampling locations required for monitoring functional evolution. 

\subsection{Problem Formulation}\label{sec:formulation}
We focus on predictive inference and control over a time-varying stochastic process, whose mean $f$ is temporally evolving: %can have both smooth (long-term or secular) and discrete (abrupt) spatiotemporal evolution:
\begin{eqnarray}\eqlabel{e:bnp_model1}
& f_{k+1}& \sim \mathbb{F}(f_k,\eta_k) %~~\mathrm{when} ~~ f\in D\\ \nonumber
%& \dot f& \sim \mathcal{F}(f,u,t) ~~\mathrm{when} ~~ f\in C 
\end{eqnarray}
where %$\mathbb{F}$ and 
$\mathbb{F}$ is a distribution varying with time $t$ and exogenous inputs $\eta$. %, and $\dot f$ and $f^+$ denote continuous or discrete evolutions. 
The theory of reproducing kernel Hilbert spaces (RKHSs) provides powerful tools for generating flexible classes of functions with relative ease, and is thus a natural choice for modeling complex spatial functions  \cite{scholkopf:bk:2002}. Therefore, our focus will be on spatiotemporally evolving kernel-based models, such as Gaussian Processes (GPs). 
In a kernel-based model,
$\kernel:\dom\times\dom\to\R$ is a positive definite kernel on some compact domain $\dom$ that models the covariance between any two points in the input space. A Mercer kernel \cite{scholkopf:bk:2002} implies the existence of a smooth map $\fmap:\dom\to\fspace$, where $\fspace$ is an RKHS with the property
\begin{align}
 \kernel(x,y) &= \Kiprod{x}{y} = \Kiprod{\kernel(x,\cdot)}{\kernel(y,\cdot)}. 
\end{align}
There is a large body of literature  on modeling spatiotemporal evolution in $\fspace$ \cite{wikle2002kernel,cressie2011statistics}. %, including Bayesian hierarchic modeling with Gaussian Processes \cite{wikle2003hierarchical}. 
%In the GP formulation the spatiotemporally evolving function $f(x,t)$ is modeled as a draw from a Gaussian process: $f(x)\sim\GP(\imean(x),\kernel(x,x')$, where $k$ is the covariance kernel, and  $m(.)=\sum_{i=1}^{N}\weight_i\kernel(x_i,.)$, where $\alpha_i$ are the weights of the linear combination of covariance kernels $\kernel$ located at bases vectors $x_i$. 
A simple approach for spatiotemporal modeling is to utilize both spatial and temporal variables as inputs to the kernel \cite{perez:13:gaussian,Chowdhary13_ACC2}. However, this technique leads to an ever-growing kernel dictionary, which is computationally taxing.
Furthermore, constraining the dictionary size or utilizing a moving window will %forget previously encountered data, which may 
occlude the learning of long-term patterns. Periodic or nonstationary covariance functions and nonlinear transformations have been proposed to address this issue \cite{ma2003nonstationary,RasmussenWilliams2005}. Furthermore, work in the design of nonseparable and nonstationary covariance kernels seeks to design kernels optimized to environment-specific dynamics, and optimize their hyperparameters in local regions of the input space  \cite{garg2012AAAI,das2014nonstationary,plagemann2008nonstationary}. %These techniques can run into scalability issue, because the hyperparameter optimization problem is often non-convex, therefore, there has been interest in developing strategies for minimizing the computational burden \cite{garg2012AAAI}.
%The modeling techniques proposed in this paper can accommodate nonsationary, nonparametric, and nonseparable kernels. %, 
The model of spatiotemporal functional evolution proposed in this paper builds on the idea that modeling the temporal evolution of mixing weights of a kernel model is a valid approach to spatiotemporal modeling. The key idea behind our approach is that the spatiotemporal evolution of a kernel-based model can be directly modeled by tracing the evolution of the mean embedded in a RKHS using switched ordinary differential equations (ODE) when the evolution is continuous, or switched difference equations when it is discrete (Figure \ref{fig:hilbert_evolution}).
The advantage of this approach is that it allows us to utilize powerful ideas from systems theory for knowing necessary conditions for functional convergence; furthermore, it offers a natural framework for designing control mechanisms as well. 
\begin{figure}[t]
\centering
 \includegraphics[width=0.8\columnwidth]{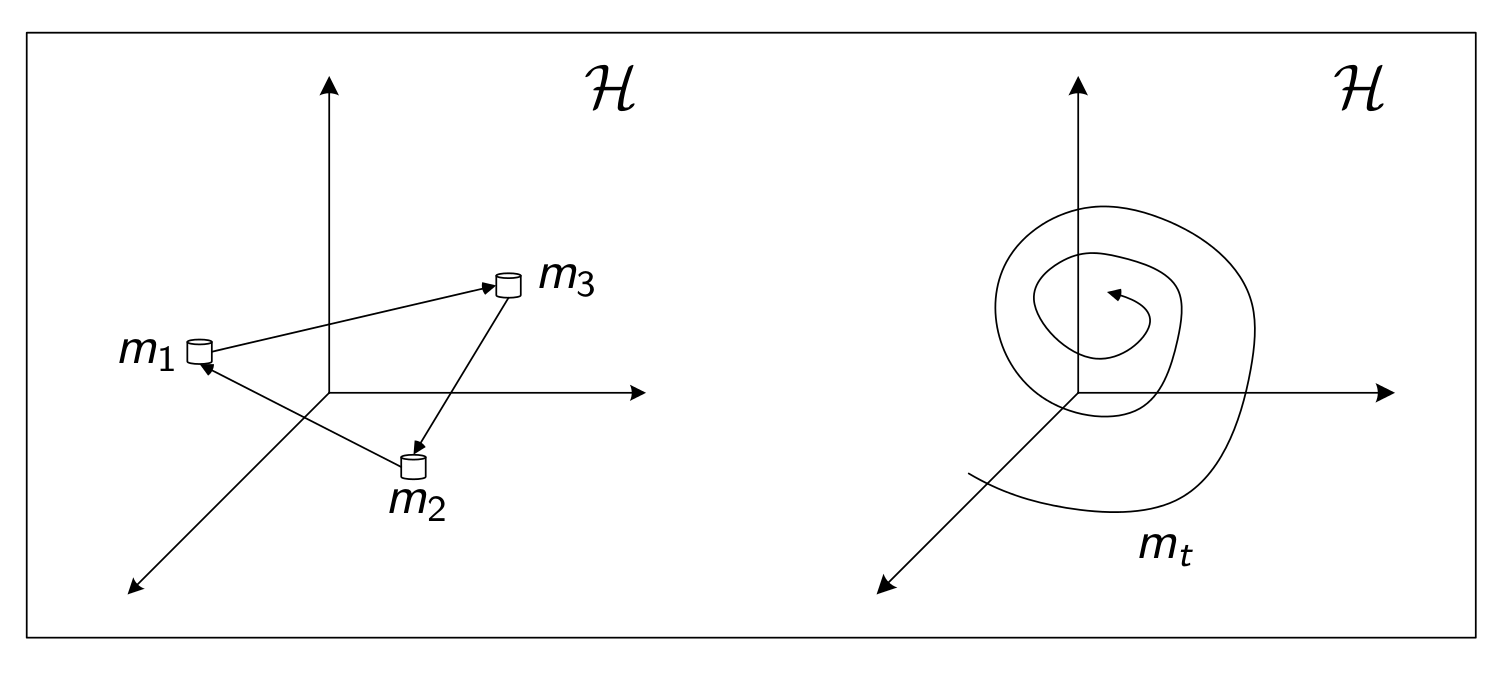}
 \vspace{-0.1in}
 \caption{Two types of Hilbert space evolutions. Left: the model, represented by the 
          functions $m_i$, switches discretely in the Hilbert space $\fspace$; Right: 
          the evolution of the function $m_t$ is smooth, represented by a solution to an 
          ordinary differential equation in $\fspace$.}
\label{fig:hilbert_evolution}
\vspace{-0.2in}
\end{figure}
In this paper, we restrict our attention to the class of functional evolutions $\mathbb{F}$ defined by linear Markovian transitions in an RKHS. While extension to the nonlinear case is possible (and non-trivial), it is not pursued in this paper to help ease the exposition of key ideas. 
Let $y\in\R^{\nsamp}$ be the measurements of the function available from $\nsamp$ sensors, $\sysop:\fspace\to\fspace$ be a linear transition operator in the RKHS $\fspace$, and $\measop:\fspace\to\R^{\nsamp}$ be a linear measurement operator, the model for the infinite-dimensional functional evolution and measurement studied in this paper is:
\begin{align}\eqlabel{ideal_lin_evol}
 f_{k+1} &= \sysop f_k + \eta_k\\
 y_k &= \measop f_k + \zeta_k,
\end{align}
where $\eta_k$ is a zero-mean stochastic process in $\fspace$, and $\zeta_k$ is a Wiener process in $\R^{\nsamp}$. 
For many kernels, the feature map $\fmap$ is unknown, and therefore it is necessary to work in the dual space of $\fspace$. For concreteness, we work with an approximate space as follows: given
points $\shCent = \shCentLong$, $c_i\in\dom$, we have a dictionary of atoms $\Atoms = \begin{bmatrix}\fmap(c_1) &\cdots & \fmap(c_{\ncent})
\end{bmatrix}$, $\fmap(c_i)\in\fspace$, the span of which is a strict subspace of the RKHS generated by the kernel. 
% In general, the spaces we work with are subsets of $\fspace$, due
% to $\fspace$ being high dimensional, and our desire to work using the dual. 
Formally, we have
\begin{align}
 \shCent\mapsto \fspaceC := \Span 
 \begin{bmatrix}\fmap(c_1) &\cdots & \fmap(c_{\ncent})
\end{bmatrix}
\subset \fspace.
\end{align} 
This regime, which trades off the flexibility of a truly nonparametric approach for computational realizability, still allows for the representation of rich phenomena. 
% Once we have this fixed set of atoms, typically, we consider the dual space associated to \eqref{ideal_lin_evol} via the matrix evolution equations 
Let $\nsamp$ represent the number of sampling locations, and $\ncent$ be the number of bases generating 
$\fspaceC$. Note that every function $f\in\fspaceC$ has an expansion of the form 
\begin{align}\eqlabel{expansion}
 f(x) = \sum_{i=1}^{\ncent} w_i k(c_i,x).
\end{align}
This expansion allows us to write the $w_i$ coordinates in the dual space as vectors $w\in\R^{\ncent}$.  
We can show the relation of the function spaces to their Euclidean counterparts via commutative diagrams. Define $\weightmap:\fspaceC\to\R^{\ncent}$ as the operator that maps the coordinates $w_i$ in \eqref{expansion} to vectors $\weight\in\R^{\ncent}$, and let $\weightmapI:\R^{\ncent}\to\fspaceC$. Note that for finite-dimensional spaces, this inverse map always exists. These definitions allow us to outline the relations between the dynamics operators $\sysop$ and $A$, and the measurement operators $\measop$ and $\empK$ using the commutative diagrams in Figure \ref{fig:commute_sysop_dyn} and Figure \ref{fig:commute_sysop_meas} respectively. 
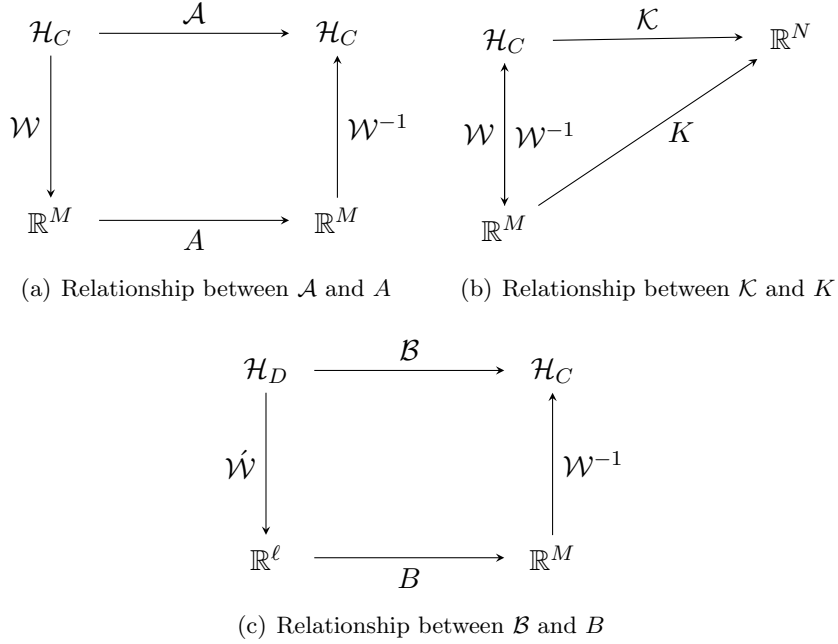
\begin{figure}
\centering
\subfigure[Relationship between $\sysop$ and $A$]{
\label{fig:commute_sysop_dyn}
%\resizebox{0.3\columnwidth}{%
\begin{tikzpicture}
  \matrix (m) [ampersand replacement=\&, matrix of math nodes, 
                 row sep=0.75in,column sep=1in,minimum width=0.5in] {
     \fspaceC \& \fspaceC \\
     \R^{\ncent} \& \R^{\ncent} \\};
  \path[-stealth]
    (m-1-1) edge node [left] {$\weightmap$} (m-2-1)
            edge [right] node [above] {$\sysop$} (m-1-2)
    (m-2-1.east|-m-2-2) edge node [below] {$ A$} node [above] {} (m-2-2)
    (m-2-2) edge node [right] {$\weightmapI$} (m-1-2);            
\end{tikzpicture}  
%}% end resize box
}% end subfigure
\
\subfigure[Relationship between $\measop$ and $\empK$]{
\label{fig:commute_sysop_meas}
%\resizebox{0.3\columnwidth}{%
\begin{tikzpicture}
  \matrix (m) [ampersand replacement=\&, matrix of math nodes, 
               row sep=0.75in,column sep=1in,minimum width=0.5in] {
     \fspaceC \& \R^{\nsamp} \\
     \R^{\ncent} \&  \\};
  \path[-stealth]
    (m-1-1) edge node [left] {$\weightmap$} (m-2-1)
            edge [right] node [above] {$\measop$} (m-1-2)
    (m-2-1) edge node [right] {$\ \empK$} (m-1-2)
    (m-2-1) edge node [right] {$\weightmapI$} (m-1-1);            
\end{tikzpicture}
%}% end resize box
}% end subfigure
\
\subfigure[Relationship between $\controlop$ and $B$]{
\label{fig:commute_sysop_control}
%\resizebox{0.3\columnwidth}{%
\begin{tikzpicture}
  \matrix (m) [ampersand replacement=\&, matrix of math nodes, 
                 row sep=0.75in,column sep=1in,minimum width=0.5in] {
     \fspaceD \& \fspaceC \\
     \R^{\ncontrol} \& \R^{\ncent} \\};
  \path[-stealth]
    (m-1-1) edge node [left] {$\weightmapC$} (m-2-1)
            edge [right] node [above] {$\controlop$} (m-1-2)
    (m-2-1.east|-m-2-2) edge node [below] {$ B$} node [above] {} (m-2-2)
    (m-2-2) edge node [right] {$\weightmapI$} (m-1-2);            
\end{tikzpicture}
%}% end resize box
}% end subfigure
\caption{Commutative diagrams between primal and dual spaces}
\label{fig:commute_sysop}
\end{figure}
The finite-dimensional evolution equations equivalent to \eqref{ideal_lin_evol} in the dual space can be formulated as
\begin{align} %\eqlabel{ideal_lin_evol_dual}
 \weight_{k+1} &= A\weight_k + \eta_k \eqlabel{k_measure}\\
 y_{k} &= \empK_{k} w_{k} +\zeta_k, \eqlabel{k_measure1}
\end{align}
where we have matrices $A\in \R^{\ncent\times\ncent}, \ \empK_{k}\in \R^{\nsamp\times\ncent}$, the vectors $\weight_k, \weight\in\R^{\ncent}$, and we have slightly abused notation to let $\eta_k$ and $\zeta_k$ denote their $\fspaceC$ counterparts.
Note that the measurement operator $\measop$ is simply a sampling of the function $f$ at an arbitrary set of sensing locations $\sampSet = \sampSetLong$, where $x_i\in\dom$: we will see how this affects the structure of $\empK_k$ momentarily. 
% One of our major goals is to minimize that set of sensing locations while ensuring that $f$ can be accurately estimated.

The equations \eqref{ideal_lin_evol} suggest an immediate extension to functional control problems. 
Pick another dictionary of atoms $\AtomsControl = \begin{bmatrix}\fmap(d_1) &\cdots & \fmap(d_{\ncontrol})
\end{bmatrix}$, $\fmap(d_j)\in\fspace$, $d_j\in\dom$, the span of which, denoted by $\fspaceD$, is a strict subspace of the RKHS $\fspace$  generated by the kernel. 
The functional evolution equation is then as follows:
\begin{align}\eqlabel{ideal_lin_evol_control}
 f_{k+1} &= \sysop f_k + \controlop \control_k + \eta_k\\
 y_k &= \measop_k f_k + \zeta_k,
\end{align}
where the control functions $\control_k$ evolve in $\fspaceD$, and $\controlop:\fspaceD\to\fspaceC$. To derive the finite-dimensional equivalent of $\controlop$, we have to work out the structure of the matrix $B$: since $\fspaceC$ is not, in general, isomorphic to $\fspaceD$, this imposes strict restrictions on $B$. We derive $B$ using least squares using the inner product of $\fspace$. Let $\control = \sum_{j=1}^{\ncontrol}\weightc_j\kernel(d_j,x)$, and let $\Atoms = 
 \begin{bmatrix}\fmap(c_1) &\cdots & \fmap(c_{\ncent})
\end{bmatrix}$ be the basis for $\fspaceC$. Then the projection of $\delta$ onto $\fspaceC$ can be derived as 
\begin{align*}
 \begin{bmatrix}
  \l \delta, \fmap(c_1) \r_{\fspace}\\
   \vdots\\
  \l \delta, \fmap(c_{\ncent}) \r_{\fspace} 
 \end{bmatrix}
 &= 
 \underbrace{
 \begin{bmatrix}
  \kernel(d_1,c_1) & \cdots & \kernel(d_{\ncontrol},c_1)\\
   \vdots  &\ddots &\vdots\\
  \kernel(d_1,c_{\ncent}) & \cdots & \kernel(d_{\ncontrol},c_{\ncent})
 \end{bmatrix}}_{\empKCD}
 \begin{bmatrix}
  \weightc_1\\
  \vdots\\
  \weightc_{\ncontrol}
 \end{bmatrix},
\end{align*}
using the reproducing property. 
This derivation shows that the operator $B = \empKCD\in\R^{\ncent\times\ncontrol}$, the kernel matrix between the data $\centers$ generating the atoms $\Atoms$ of $\fspaceC$ and the data $\centerscontrol$ generating the atoms $\AtomsControl$ of $\fspaceD$. Using similar arguments, it can be shown that, given sensing locations $\data = \{x_1,x_2,\dots, x_{\nsamp}\}$, $\empKD\in\R^{\nsamp\times\ncontrol}$ is the kernel matrix between $\data$ and $\centerscontrol$. Thus the finite-dimensional evolution equations equivalent to \eqref{ideal_lin_evol_control} are
\begin{align}
 \weight_{k} &= A\weight_k + \empKCD\weightc_k\eqlabel{k_measure_c}\\
 y_{k} &= \empK_{k} w_{k} \eqlabel{k_measure_c1}.
\end{align}
We pause here to point out just how flexible the kernel-based framework is. First of all, the choice of kernel completely determines the space $\fspace$, which may allow wildly different functional outputs for the same dynamics matrix, as shown in Figure \ref{fig:kernel_variation}. Note also that the dynamical equations \eqref{k_measure_c} and \eqref{k_measure_c1} are \emph{independent of the choice of domain $\dom$}: different domains with different kernels may result in the same sequence of matrices $\empK_k$. This allows our results to hold for any domain over which a kernel can be defined, including examples like graphs, hidden Markov models, and strings, which are not typically studied in the controls literature, at virtually no extra complexity in implementation beyond the design of the actual sensors and actuators. This remarkable fact is why we denote our method to be \emph{domain agnostic}. 
\begin{figure}[t]
\centering
\subfigure[Gaussian]{
  \includegraphics[width=0.35\columnwidth]{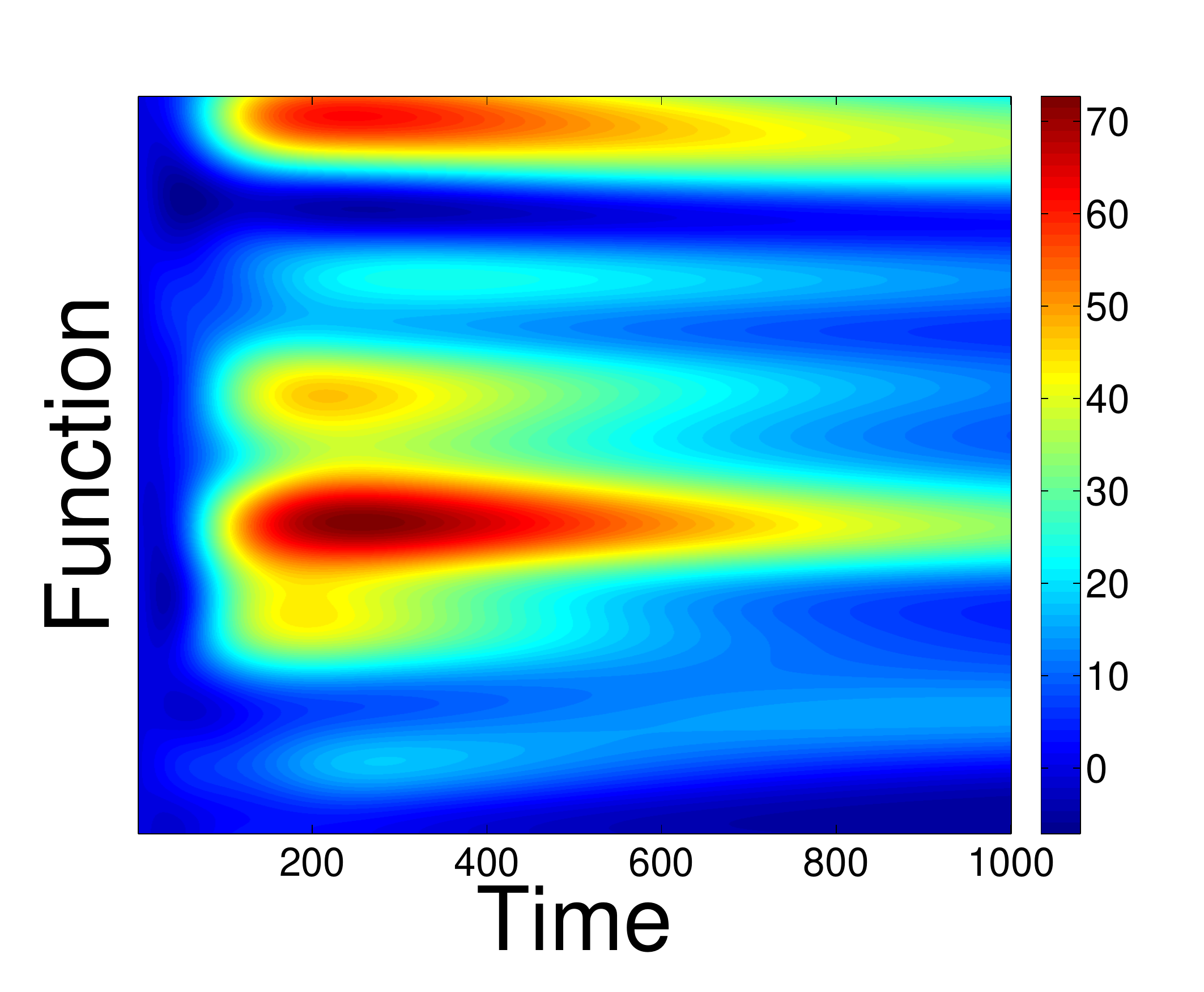}
  %\label{fig1:subfig1}
}
\subfigure[Laplacian]{
  \includegraphics[width=0.35\columnwidth]{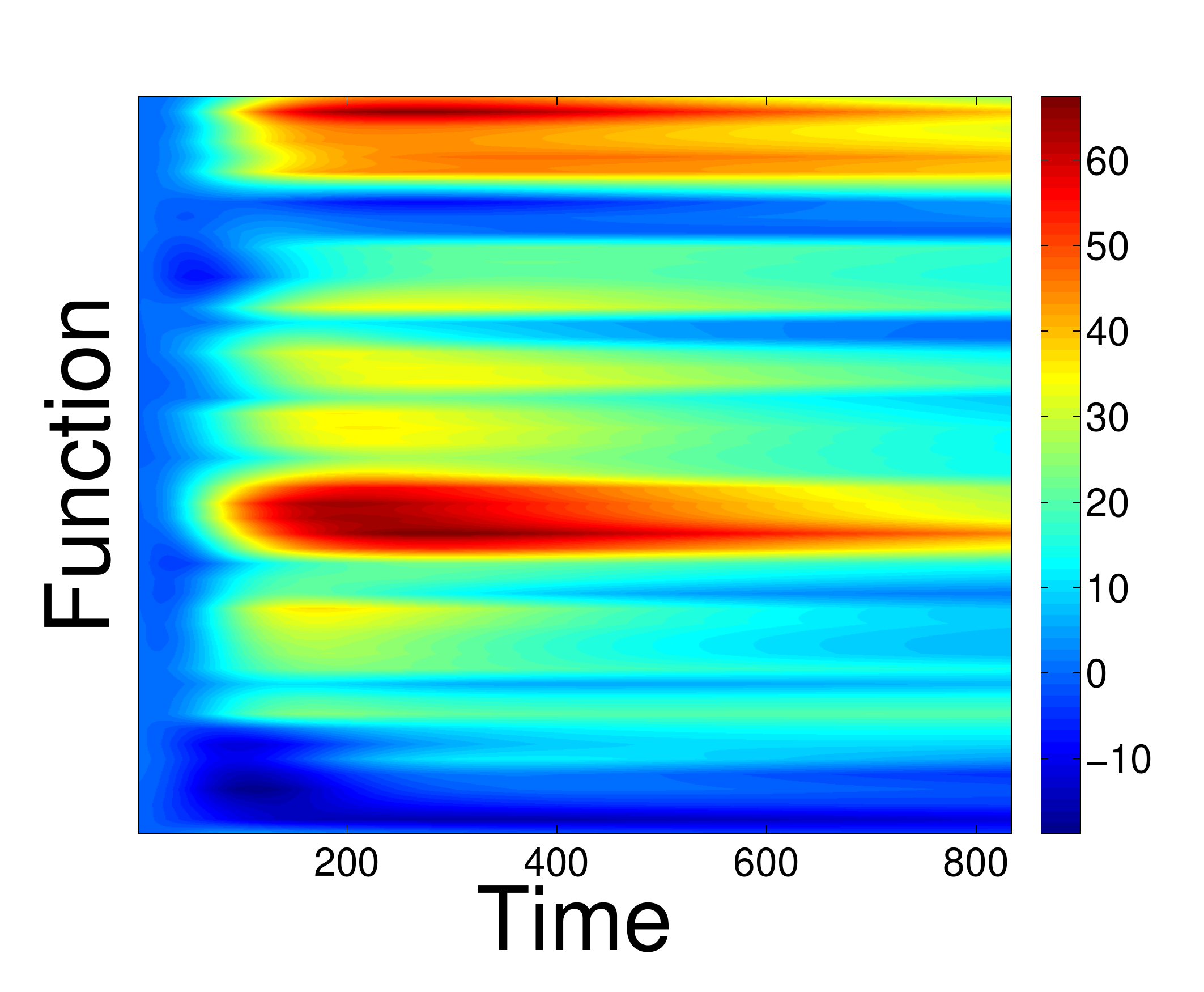}
  %\label{fig1:subfig1}
}
\\
\subfigure[Periodic]{
  \includegraphics[width=0.35\columnwidth]{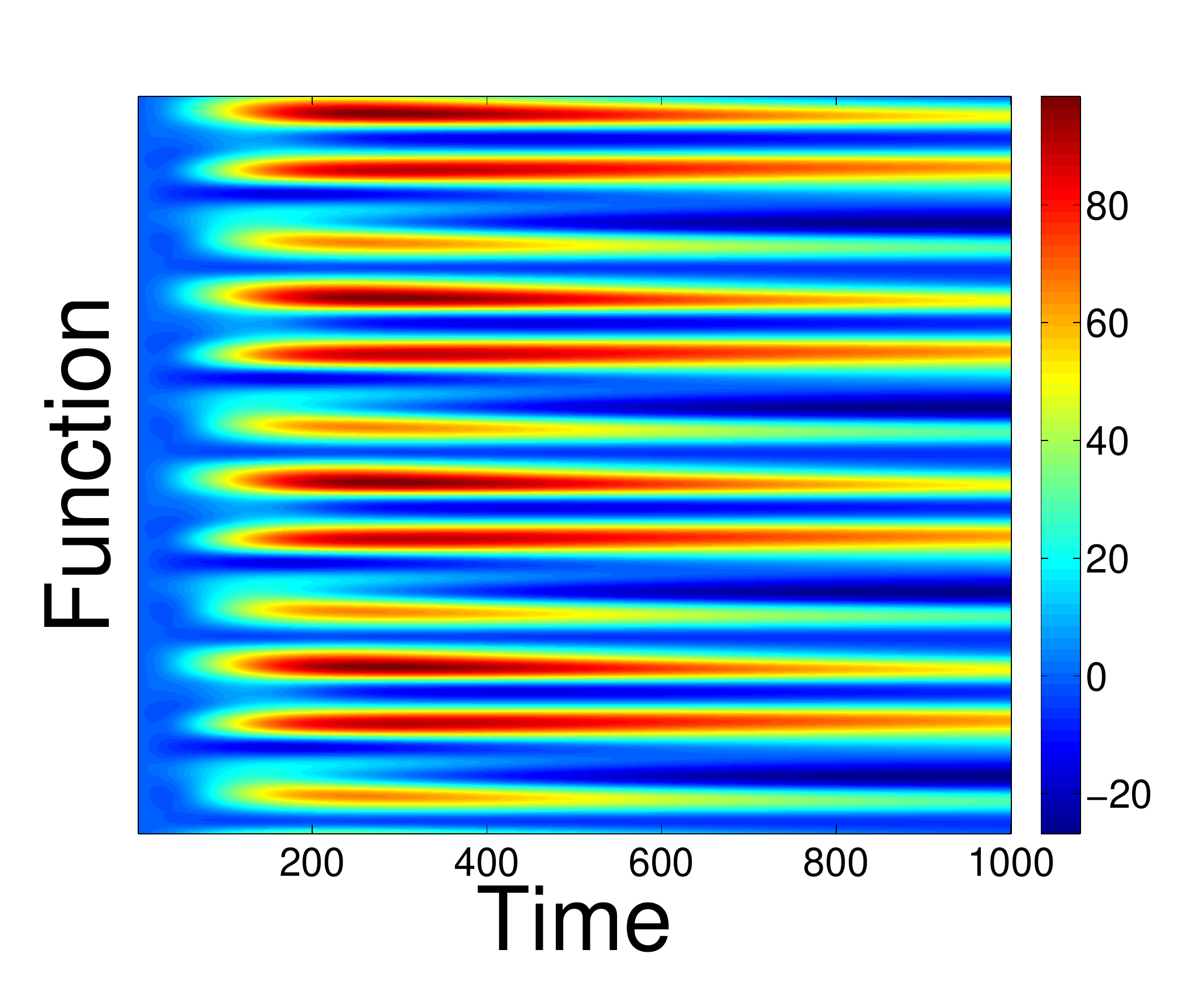}
}
\subfigure[Locally periodic]{
  \includegraphics[width=0.35\columnwidth]{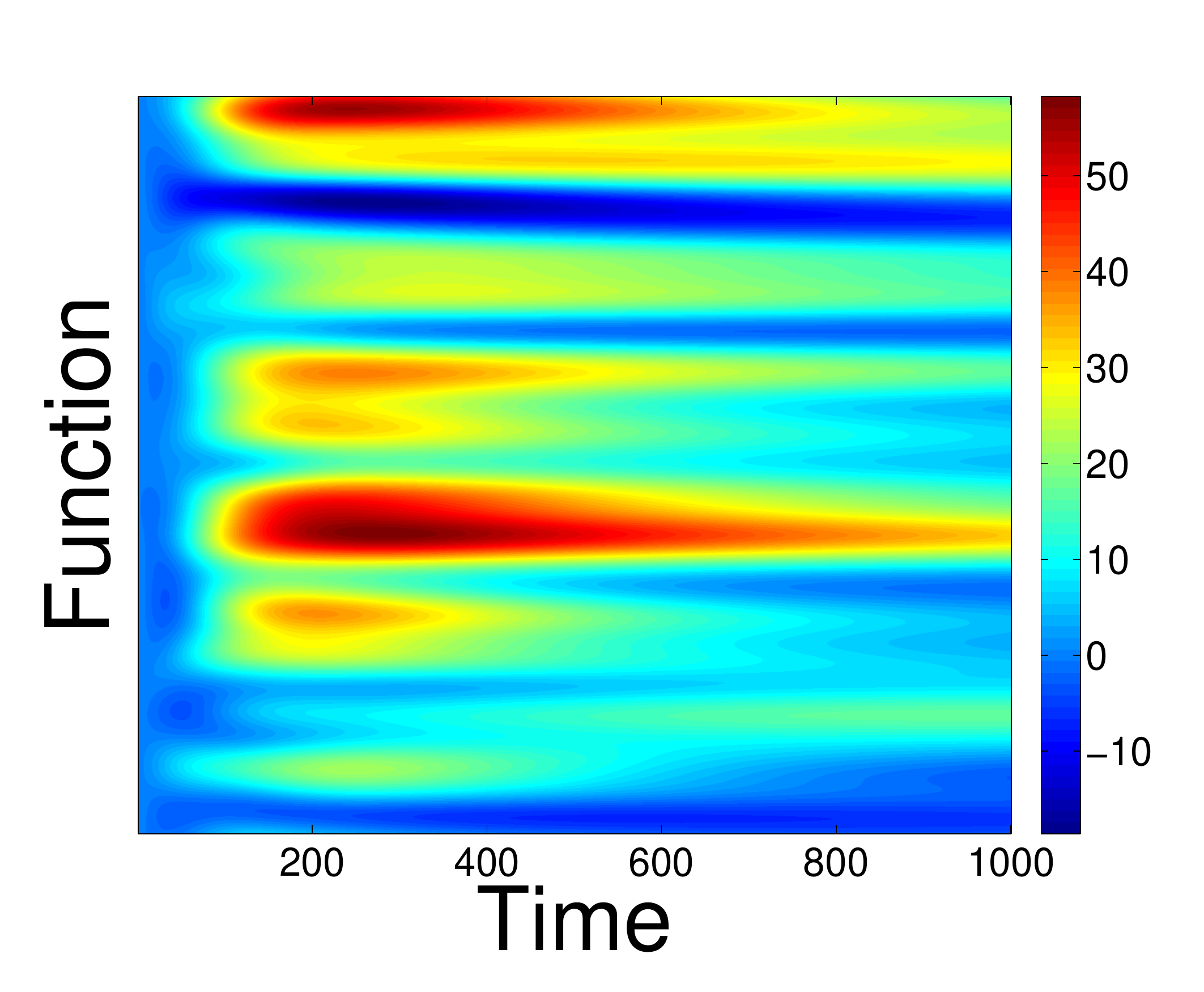}
  %\label{fig1:subfig1}
}
\vspace{-0.1in}
\caption{One-dimensional function evolution over a fixed systems matrix $A$, 
initial condition $\weight_0$ and centers $\shCent$, but with different kernels $\kernel(x,y)$. 
Each $y$-vector at a given value of $x$ represents the output of the function % at a given moment
%in time, 
which evolves from left to right. 
As can be seen, changing the kernel creates quite different behavior for the same system. 
}
\label{fig:kernel_variation}
\vspace{-0.2in}
\end{figure}

Since $\empK_{k+1}$ is the kernel matrix between the data points and basis vectors, its rows are of the form 
 $\empK_{(i)} = 
 \begin{bmatrix}
  \kernel(x_i, c_1) & \kernel(x_i, c_2) & \cdots & \kernel(x_i, c_{\ncent}) 
 \end{bmatrix}$. 
In systems-theoretic language, each row of the kernel matrix corresponds to a \emph{measurement} at a particular location, and the matrix itself acts as a measurement operator. We define the \emph{generalized observability matrix} \cite{zhou:bk:96} as 
\begin{align}\eqlabel{obs_mat}
 \Obs_{\Tset} = 
 \begin{bmatrix}
  \empK_{t_1} A^{t_1}\\  
  \cdots\\
  \empK_{t_\otime} A^{t_\otime}
 \end{bmatrix},
\end{align}
where $\Tset = \{t_1, t_2, \dots, t_{\otime}\}$ are the set of instances $t_i$
when we apply the measurement operators $\empK_{t_i}$. 
% In this paper, we will refer to the products $\empK_{t_i}A^{t_i}$ as \emph{transmissions}. 
Note that $\Obs_{\Tset}\in\R^{\nsamp\otime\times\ncent}$. 
Similarly, we can define the \emph{generalized controllability matrix} as 
\begin{align}\eqlabel{control_mat}
 \ControlMat_{\Tset} = 
 \begin{bmatrix}
   {A^{t_1}}^T {\empKD}_{t_1} & {A^{t_2}}^T {\empKD}_{t_2} & \cdots {A^{t_\otime}}^T {\empKD}_{t_\otime}
 \end{bmatrix},
\end{align}
$\ControlMat_{\Tset}\in\R^{\ncent\times\otime\ncontrol}$
A linear system is said to be observable if $\Obs_{\Tset}$ has full column rank (i.e. $\mathrm{Rank}~\Obs_{\Tset}=\ncent$)
and is controllable if $\ControlMat_{\Tset}$ has full row rank, for $\Tset = \{0, 1, \dots, \ncent-1\}$ \cite{zhou:bk:96}. 

Observability guarantees that a feedback-based observer can be designed such that the estimate of $w$ denoted by $\hat {w_k}$ converges exponentially fast to the true state $w_k$. %
In particular, observability is the necessary condition for the existence of a unique solution to the Riccatti equation required in designing a Kalman filter. Therefore, when $\eta,\zeta$ have a zero mean Gaussian distribution, a Bayes optimal filter can be designed for estimating $w$ if and only if $\mathrm{Rank}~\Obs_{\Tset}=\ncent$. 
Similarly, controllability guarantees that a feedback-based controller can drive the current functional state of the system $f_{k}$ to a reference function $f_{\text{ref}}$, as long as $f_{\text{ref}}\in\fspaceC$.

We are now in a position to formally state the spatiotemporal monitoring and control problem considered: Given a spatiotemporally evolving system modeled using \eqref{ideal_lin_evol_control}, choose a set of $\nsamp$ sensing locations $\sampSet = \sampSetLong$ and $\ncontrol$ actuating locations $\controlCent = \controlCentLong$
such that even with $\nsamp\ll \ncent$ and $\ncontrol\ll \ncent$, the functional evolution of the spatiotemporal model can be estimated robustly, and driven (controlled) to a reference function $f_{\text{ref}}$. Our approach to solve this problem relies on the design of the measurement operator  $\empK$ such that the pair $(A,\empK)$ is observable, and the control operator $\empKD$ such that the pair $(A,\empKD)$ is controllable. 

\subsection{Theoretical Results}\label{sec:theory_results}
In this section, we prove results concerning the observability of spatiotemporally varying functions modeled by the functional evolution and measurement equations \eqref{k_measure} and \eqref{k_measure1} formulated in Section \ref{sec:formulation}. In particular,  observability of the system states implies that we can recover the current state of the spatiotemporally varying function using a small number of sampling locations $\nsamp$, which allows us to 1) track the function, and 2) predict its evolution forward in time. It should be noted that the results are also applicable to controllability of the system in \eqref{k_measure_c1} since the structure of the control matrix $K_{CD}$ is also that of a Kernel matrix.  We first show in Proposition \ref{prop:1} that if $A$ has a full-rank Jordan decomposition, the kernel matrix meeting a  condition called \emph{shadedness} (to be defined below) is sufficient for the system to be observable. In Proposition \ref{prop:2}, we prove a lower bound on the number of sampling locations required for observability which holds for more general $A$. Finally, in Proposition \ref{prop:3}, we outline a method that achieves this lower bound for certain kernels. Since both $\empK$ and $\empKCD$ are kernel matrices generated from a shared kernel, these observability results translate directly into controllability results. 

To prove our results, we will leverage the spectral decomposition of $A$. Specifically, recall that any matrix $A\in\R^{\ncent\times\ncent}$ is similar to a unique block diagonal matrix $\JorLa$ (i.e. $\exists P\in\R^{\ncent\times\ncent}$ invertible such that $A = P\JorLa P^{-1}$) whose diagonal blocks are matrices of the form 
\begin{align}
\JorLa_k(\lambda_i, \lambda_i^*) := 
\begin{bmatrix}
\MBlock & \ITBlock & \cdots &  \ZBlock\\
\vdots & \vdots & \ddots  & \ITBlock\\
\ZBlock & \ZBlock & \cdots & \MBlock \eqlabel{jor_com}
\end{bmatrix}.
\end{align}
where  $(\lambda_i, \lambda_i^*)$ is a complex conjugate eigenvalue of $A$, and 
$
\MBlock = \begin{bmatrix}
\mu_1 & \mu_2\\
-\mu_2 & \mu_1
\end{bmatrix}$ and $ \ITBlock= \begin{bmatrix}
1 & 0\\
0 & 1
\end{bmatrix}$. 
Real eigenvalues $\la_i$ correspond to the case $\MBlock = \la_i$ and $\ITBlock = 1$. 
Thus the complete real Jordan form of $A$ will be the appropriate diagonal array of these blocks. If all the eigenvalues $\lambda_i$ are nonzero and real, we say the matrix has a \emph{full-rank Jordan decomposition}.

\begin{definition}\label{def:shaded}
	\textbf{(Shaded Kernel Matrix)} Let $\kernel:\dom\times\dom\to\R$ be a positive-definite kernel on a compact domain $\dom$. 
	Let $C = [c_1,  c_2, \cdots , c_{\ncent}\}$, $c_j\in\dom$ be the points generating a 
	finite-dimensional covering of the reproducing kernel Hilbert space $\fspace$ associated to $\kernel(x,y)$,
	and let $\sampSet = \sampSetLong$, $x_i\in\dom$ 
	Let $\empK\in\R^{\nsamp\times\ncent}$ be the kernel matrix, where $ \empK_{ij} := \kernel(x_i,c_j)$. For each
	row $\empK_{(i)} := [k(x_i,c_1), k(x_i,c_2),\dots, k(x_i,c_{\ncent})]$, define the set 
	$\Ind_{(i)} := \{\iota_1^{(i)},\iota_2^{(i)},\dots, \iota_{\ncent_i}^{(i)}\}$ to be the indices in the kernel
	matrix row $i$ which are nonzero. 
	Then if 
	\begin{align}\eqlabel{shaded_cond}
	\bigcup_{1 \leq i \leq \nsamp} \Ind^{(i)} = \{1,2,\dots, \ncent\},
	\end{align}
	we denote $\empK$ as a \emph{shaded kernel matrix} (see figure \ref{shaded_matrix}).
	%  If we impose the additional constraint that 
	%  $\Ind_{(i)} \bigcap \Ind_{(j)} = \empty$ for $i\neq j$, we denote the matrix $\empK$ as an \emph{exactly shaded kernel matrix}. 
% 	Alternatively, if we define 
% 	%  $\Ind_{(i)} := \{\iota_1^{(i)},\iota_2^{(i)},\dots, \iota_{\ncent_i}^{(i)}\}$ 
% 	$\Ind_{(i)}$ to be the indices in the kernel matrix row $i$ where $|\empK_{ij}| \geq \e$,
% 	if $\empK$ meets the condition in \eqref{shaded_cond}, we denote $\empK$ as an \emph{$\e$-shaded matrix}. 
\end{definition}
This condition implies that the null space of the adjoint of $\empK$ as a linear operator between Euclidean spaces, i.e. $\empK^T:\R^{\nsamp}\to\R^{\ncent}$ is trivial. Note that, in principle, for the Gaussian kernel, a single row generates a shaded kernel matrix, although this matrix can have many entries that are extremely close to zero.  With this definition in place, we can prove the following proposition, which shows that if $A$ has a full-rank Jordan decomposition, a shaded kernel matrix is sufficient to prove observability. 

% Let $\kernel:\dom\times\dom\to\R$ be a positive definite kernel on a domain $\dom\subset\dom$. 
% Let $C = [c_1,  c_2, \cdots , c_{\ncent}\}$, $c_j\in\dom$ be the centers generating a 
% finite-dimensional covering of the reproducing kernel Hilbert space $\fspace$ associated to $\kernel(x,y)$, and
% consider the discrete linear system on $\fspace$ given by
% \begin{align}\eqlabel{hilbert_disc_trans}
%  w_{t+1} = Aw_{t},
% \end{align}
% where $A\in\R^{\ncent\times\ncent}$ has a Jordan decomposition of the form $A = \JorP\JorLa\JorP^{-1}$,
% where $\JorLa$ consists of Jordan blocks along the diagonal, i.e. 
% \begin{align*}
%  \JorLa = 
%  \begin{bmatrix}
%   \JorLa_1 & 0 & 0 &\cdots & 0 \\
%   0 & \JorLa_2 & 0 & \cdots & 0\\
%   \vdots & \vdots & \vdots & \ddots & \vdots\\
%   0 & 0 & 0 & \cdots & \JorLa_{\JorMul}
%  \end{bmatrix}, 
% \end{align*}
% where 
% \begin{align*}
% \JorLa_i := 
%  \begin{bmatrix}
%   \la_i & 1 & 0 & \cdots & 0 & 0\\
%    0 & \la_i & 1 & \cdots & 0 & 0\\
%    \vdots & \vdots & \vdots & \ddots & \vdots & \vdots\\
%    0 & 0 & 0 & \cdots & 0 & \la_i
%  \end{bmatrix}.
% \end{align*}
% Given 
% a set of time instances  $\Tset = \{t_1,t_2,\dots,t_{\otime}\}$, and a set of sampling locations $\sampSet=\sampSetLong$,
% the system \eqref{hilbert_disc_trans} is observable if the kernel matrix $\empK_{ij} := \kernel(x_i,c_j)$ is shaded,
% $\empK^D$, the row vector generated by summing the rows of $\empK$, has all nonzero entries, 
% $\Tset$ has distinct values, and $|\Tset| \geq \ncent$.

\begin{proposition}\label{prop:1}
	Let $\kernel:\dom\times\dom\to\R$ be a positive definite kernel on a domain $\dom$. 
	Let $C = [c_1,  c_2, \cdots , c_{\ncent}\}$, $c_j\in\dom$ be the points generating a 
	finite-dimensional covering of the reproducing kernel Hilbert space $\fspace$ associated to $\kernel(x,y)$, and
	consider the discrete linear system on $\fspace$ given by the evolution and measurement equations \eqref{k_measure} and \eqref{k_measure1}. 
	%  \begin{align}\eqlabel{hilbert_disc_trans}
	%   w_{t+1} = Aw_{t},
	%  \end{align} 
	%  \gXX{should we use the discrete or the continuous update form? The matrices are not the same}
	Let $A\in\R^{\ncent\times\ncent}$ be a full-rank Jordan decomposition of the form $A = \JorP\JorLa\JorP^{-1}$,
	where $\JorLa = \diag(\begin{bmatrix}\JorLa_1 & \JorLa_2 &\cdots & \JorLa_{\JorMul}\end{bmatrix})$,
	and there are no repeated eigenvalues. 
	%  \begin{align*}
	%  \JorLa_i := 
	%   \begin{bmatrix}
	%    \la_i & 1 & 0 & \cdots & 0 & 0\\
	%     0 & \la_i & 1 & \cdots & 0 & 0\\
	%     \vdots & \vdots & \vdots & \ddots & \vdots & \vdots\\
	%     0 & 0 & 0 & \cdots & 0 & \la_i
	%   \end{bmatrix}.
	%  \end{align*}
	%  and $\lambda_i\neq\lambda_j, \forall i,j \in O$ ($\lambda_i = \lambda_j$ is considered later). 
	Given a set of time instances  $\Tset = \{t_1,t_2,\dots,t_{\otime}\}$, and a set of sampling locations $\sampSet=\sampSetLong$,
	the system \eqref{k_measure} is observable if the kernel matrix $\empK_{ij} := \kernel(x_i,c_j)$ is shaded,
	$\empK^D$, the row vector generated by summing the rows of $\empK$, has all nonzero entries, 
	$\Tset$ has distinct values, and $|\Tset| \geq \ncent$.
\end{proposition}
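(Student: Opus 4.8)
The plan is to reduce the rank of the generalized observability matrix $\Obs_{\Tset}$ to a spectral, Popov-Belevitch-Hautus-type condition via the eigendecomposition of $A$, and then to extract that condition from the shadedness hypothesis. Since the eigenvalues are real, nonzero, and (by assumption) distinct, $A$ is diagonalizable: write $A = \JorP\JorLa\JorP^{-1}$ with $\JorLa = \diag(\la_1,\dots,\la_{\ncent})$ and columns $p_1,\dots,p_{\ncent}$ of $\JorP$ the eigenvectors. Because the sampling set $\sampSet$ is fixed, $\empK_{t_i}=\empK$ for all $i$, and $\empK A^{t_i} = (\empK\JorP)\,\JorLa^{t_i}\,\JorP^{-1}$; stacking over $t_i\in\Tset$ gives $\Obs_{\Tset} = G\,\JorP^{-1}$ where $G$ has block rows $(\empK\JorP)\JorLa^{t_i}$. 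Since $\JorP^{-1}$ is invertible, observability is equivalent to $G$ having column rank $\ncent$.

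Next I would test a candidate null vector. Expanding $w = \sum_j \beta_j p_j$ gives $\empK A^{t_i} w = \sum_j \beta_j \la_j^{t_i}(\empK p_j)$. Fixing a sensor index $\ell$ and letting $t_i$ range over the $|\Tset|\geq\ncent$ distinct instances, the coefficient array $[\la_j^{t_i}]_{i,j}$ is a Vandermonde-type matrix in the distinct nodes $\la_j$; its full column rank -- immediate for the consecutive exponents $\{0,1,\dots,\ncent-1\}$ (ordinary Vandermonde), and more generally for distinct exponents when the $\la_j$ share a sign -- forces $\beta_j\,(\empK_{(\ell)}p_j)=0$ for every $j$. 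Ranging over all sensors, $w\in\ker\Obs_{\Tset}$ if and only if for each $j$ either $\beta_j=0$ or $\empK p_j=0$. Hence observability is exactly the statement that no eigenvector of $A$ is annihilated by $\empK$.

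It remains to derive this spectral non-degeneracy from the hypotheses on the kernel matrix, and this is where I expect the real difficulty. Shadedness means $\empK$ has no zero column, and the extra hypothesis says $\empK^D$ (the sum of the rows) has no zero entry -- automatic, incidentally, whenever $\kernel$ takes nonnegative values. If some eigenvector satisfied $\empK p_j = 0$, then in particular $\empK^D p_j = 0$, so $p_j$ would be orthogonal to the all-nonzero vector $\empK^D$ and to every nonzero row of the shaded matrix $\empK$ simultaneously; I would argue that the shading pattern of Definition~\ref{def:shaded} is incompatible with the existence of such a nonzero eigenvector. The bookkeeping above (and the mild sign subtlety in the Vandermonde rank, which one sidesteps by taking $\Tset\supseteq\{0,1,\dots,\ncent-1\}$) is routine; the genuine obstacle is converting the combinatorial ``shaded'' condition on $\empK$ into the algebraic non-degeneracy of the pair $(A,\empK)$. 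I anticipate this will force one either to re-express shadedness in the eigenbasis -- i.e.\ as a condition on $\empK\JorP$ rather than on $\empK$ -- or to invoke positive-definiteness of $\kernel$ so that the kernel entries cannot conspire to place an eigenvector inside $\ker\empK$.
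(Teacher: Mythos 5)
Your first half is essentially the paper's own reduction, just phrased as a PBH test: the paper also passes to the Jordan coordinates, collapses $\empK$ by an invertible row operation to the single row $\empK^D$ (this is exactly where the all-nonzero row-sum hypothesis enters), and then argues that the resulting matrix with columns $\tilde{k}_{1i}\bigl(\la_i^{t_1},\dots,\la_i^{t_\otime}\bigr)^{T}$ has independent columns for distinct nonzero eigenvalues and $\otime\geq\ncent$ distinct times. Your remark about mixed-sign eigenvalues with non-consecutive exponents is a genuine subtlety that the paper's proof silently asserts away, and taking $\Tset\supseteq\{0,1,\dots,\ncent-1\}$ is the right way to be safe.

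The genuine gap is the step you flag and leave open, and it cannot be closed in the form you propose: shadedness of $\empK$ together with all entries of $\empK^D$ nonzero does \emph{not} imply that no eigenvector of a general $A=\JorP\JorLa\JorP^{-1}$ lies in $\ker\empK$. Concretely, let $\empK$ be a single row of ones (shaded, and $\empK^D$ has no zero entry) and let $A$ have distinct nonzero eigenvalues with an eigenvector $p_1=(1,-1,0,\dots,0)^{T}$; then $\empK A^{t}p_1=\la_1^{t}\,\empK p_1=0$ for every $t$, so $p_1$ lies in the kernel of $\Obs_{\Tset}$ for any choice of $\Tset$ and your PBH criterion fails. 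The hypotheses constrain $\empK$ in the standard coordinates, while your criterion needs control of $\empK\JorP$, so no combinatorial argument from the shading pattern (nor positivity of the kernel) can bridge this for arbitrary $\JorP$; your own suspicion that the condition must be re-expressed on $\empK\JorP$ is the correct diagnosis. The paper's proof handles this by first proving the statement for $A=\JorLa$ diagonal, where the eigenvectors are the standard basis vectors and the PBH condition is literally ``every entry of $\empK^D$ is nonzero,'' i.e.\ the stated hypothesis, and then passing to $A=\JorP\JorLa\JorP^{-1}$ by asserting that multiplication by the invertible block matrix built from $\JorP$ preserves the rank conclusion; that final conjugation step is exactly where the hypotheses should really be read in the eigenbasis. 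To complete your argument in the spirit of the paper, either restrict to $A$ in Jordan coordinates (where your kernel condition follows immediately) or restate the shadedness and row-sum hypotheses as conditions on $\empK\JorP$.
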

\begin{proof}
 To begin, consider a system where $A = \JorLa$, with Jordan blocks $\{\JorLa_1, \JorLa_2, \dots, \JorLa_{\JorMul}\}$ along the 
 diagonal. Then $A^{t_i} = \diag(\begin{bmatrix}\JorLa_1^{t_i} & \JorLa_2^{t_i} & \cdots & \JorLa_{\JorMul}^{t_i}\end{bmatrix})$. 
 %The exponentiation of each Jordan block results in upper triangular matrices with linearly independent columns. 
 We have that   
 \begin{equation*}
 \Obs_{\Tset} = 
 \begin{bmatrix}
  \empK A^{t_1}\\
  \cdots\\
  \empK A^{t_\otime}
 \end{bmatrix}
 =
 \underbrace{
 \begin{bmatrix}
  \empK & \cdots & \empK
 \end{bmatrix}}_{\empKBlock\in\R^{\nsamp\times\ncent\otime}}
 \underbrace{
 \begin{bmatrix}
  \JorLa_1^{t_1} & \cdots & 0\\
   \vdots & \ddots & \vdots\\
  0 & \cdots & \JorLa_{\JorMul}^{t_1}\\ 
   \hline
   \vdots & \ddots & \vdots\\
   \hline
  \JorLa_1^{t_{\otime}} & \cdots & 0\\
   \vdots & \ddots & \vdots\\
  0 & \cdots & \JorLa_{\JorMul}^{t_\otime}   
 \end{bmatrix}}_{\ABlock\in\R^{\ncent\otime\times\ncent}}
\end{equation*}
Recall that a matrix's rank is preserved under a product with an invertible matrix. Design a 
matrix $U\in\R^{\nsamp\times\nsamp}$ s.t. $\empKMod := U\empK$ is a matrix with one row vector of nonzeros, and all of the 
remaining rows as zeros. Then $\Rank(\empKBlock\ABlock) = \Rank(U\empKBlock\ABlock)$. 
% WLOG, assume that the kernel matrix is exactly shaded, which implies that for each
% row $\empK_{(i)}$, the set of indices $\Ind_{(i)}$ where $\empK_ij \geq 0$ has no overlap with the other rows. 
Therefore, we have that 
%  \gXX{the second column entry should be $k_{11} \binom{t_j}{1}\lambda_1^{t_j-1} + k_{12}\la_1^{t_j}$, if yes does that affects the proof - Harshal}
\begin{equation*}
 \empKMod A^{t_j} = 
 \begin{bmatrix}
  \empKMod_{(1)}\\
  0\\
  \vdots\\
  0
 \end{bmatrix}
 A^{t_j}\\
 = 
 \begin{bmatrix}
  k_{11}\la_1^{t_j} & \binom{t_j}{1}\la_1^{t_j-1} + k_{12}\la_1^{t_j}  & \cdots & k_{1\ncent}\la_{\JorMul}^{t_j}\\
  0 & 0 & \cdots & 0\\
  \vdots & \vdots & \ddots & 0\\
  0 & 0 & \cdots & 0
 \end{bmatrix}
\end{equation*}
Therefore, following some more elementary row operations encoded by $V\in\R^{\ncent\otime\times\ncent\otime}$, we get that
\begin{equation*}
 V \begin{bmatrix}
    \empKMod & \cdots & \empKMod
   \end{bmatrix}
   \begin{bmatrix}
    A^{t_1}\\
    \vdots\\
    A^{t_{\otime}}
   \end{bmatrix}
   = 
   \begin{bmatrix}
    \tilde{k}_{11}\la_1^{t_1} & \cdots & \tilde{k}_{1\ncent}\la_{\JorMul}^{t_1}\\
    \tilde{k}_{11}\la_1^{t_2} & \cdots & \tilde{k}_{1\ncent}\la_{\JorMul}^{t_2}\\
    \vdots & \ddots & 0\\
    \tilde{k}_{11}\la_1^{t_{\otime}} & \cdots 
    & \tilde{k}_{1\ncent}\la_{\JorMul}^{t_{\otime}}\\
    \mathbf{0} & \cdots & \mathbf{0}
   \end{bmatrix}\\
   = 
   \begin{bmatrix}
    \boldsymbol{\Phi}\\
    \widehat{\mathbf{0}}
   \end{bmatrix}.
\end{equation*}
If the individual entries $\tilde{k}_{1i}$ are nonzero, and the Jordan block diagonals have nonzero eigenvalues, the columns of $\boldsymbol\Phi$
become linearly independent. Therefore, if $\otime \geq \ncent$, the column rank of $\Obs_{\Tset}$ is $\ncent$, which results in an observable system.

To extend this proof to matrices $A = \JorP\JorLa\JorP^{-1}$, note that 
\begin{equation*}
 \Obs_{\Tset} = 
 \begin{bmatrix}
  \empK A^{t_1}\\
  \cdots\\
  \empK A^{t_\otime}
 \end{bmatrix}
 =
  \begin{bmatrix}
  \empK \JorP\JorLa^{t_1}\JorP^{-1}\\
  \cdots\\
  \empK \JorP\JorLa^{t_\otime}\JorP^{-1}.
 \end{bmatrix}
 =
 \begin{bmatrix}
  \empK & \cdots & \empK
 \end{bmatrix}
 \boldsymbol{\JorP}
 \boldsymbol{\JorLa}^t
 \boldsymbol{\JorP^{-1}},
\end{equation*} 
where $\boldsymbol{\JorP}\in\R^{\ncent\otime\times\ncent\otime}$, $\boldsymbol{\JorLa}^t\in\R^{\ncent\otime\times\ncent\otime}$, and
$\boldsymbol{\JorP^{-1}}\in\R^{\ncent\otime\times\ncent\otime}$ are the block diagonal matrices associated with the system. 
Since $\boldsymbol{\JorP}$ is an invertible matrix, the conclusions about the column rank drawn before still hold, and the system is observable. 
\end{proof}
When the eigenvalues of the system matrix are repeated, it is not enough for $\empK$ to be shaded. The next proposition proves a lower bound on the number of observations required. 
\begin{proposition}\label{prop:2}
 Suppose that the conditions in Proposition \ref{prop:1} hold, with the relaxation that
 the Jordan blocks $\begin{bmatrix}\JorLa_1 & \JorLa_2 &\cdots & \JorLa_{\JorMul}\end{bmatrix}$ may have 
 repeated eigenvalues. Let $\nevals$ be the number of unique eigenvalues of $A$, and let $\geomMult(\eval_i)$ denote the geometric multiplicity of eigenvalue $\eval_i$. 
 Then there exist kernels $\kernel(x,y)$ such that the lower bound $\minmeas$ on the number of sampling locations $\nsamp$ is given by the cyclic index of $A$, which can be computed as 
 \begin{align}\eqlabel{geom_mult}
  \minmeas = \max_{1\leq i\leq\nevals}\geomMult(\eval_i).
 \end{align}
%  Furthermore, suppose that 
%  define 
%  \begin{align}
%   \empKShadFull = \begin{bmatrix}
%                     \empK^{(1)}\\
%                     \empK^{(2)}\\
%                     \vdots\\
%                     \empK^{(\minmeas)}
%                   \end{bmatrix}
%  \end{align}
%  as the $\minmeas$-shaded matrix which consists of $\minmeas$ shaded matrices with rows linearly independent from each
%  other. Then system \eqref{k_measure} is observable if $\Tset$ has distinct values, and $|\Tset| \geq \ncent$.
\end{proposition}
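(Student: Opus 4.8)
The plan is to mimic the reduction in the proof of Proposition~\ref{prop:1} and then isolate exactly the obstruction that repeated eigenvalues create. Since rank is preserved under multiplication by the invertible block-diagonal matrices $\boldsymbol{\JorP},\boldsymbol{\JorP^{-1}}$, it suffices to study $\Obs_{\Tset}$ with $A=\JorLa$. Group the Jordan blocks of $\JorLa$ by eigenvalue and fix an eigenvalue $\eval_i$ with geometric multiplicity $\geomMult(\eval_i)=g$, so that $A=\JorLa$ has $g$ Jordan blocks with eigenvalue $\eval_i$; let $p_1,\dots,p_g$ be the state coordinates of the genuine (leading) eigenvectors of those $g$ blocks. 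Because $A^{t_j}e_{p_m}=\eval_i^{\,t_j}e_{p_m}$, the $p_m$-th column of $\Obs_{\Tset}=\begin{bmatrix}\empK A^{t_1}\\ \cdots\\ \empK A^{t_\otime}\end{bmatrix}$ is precisely $(\eval_i^{\,t_1},\dots,\eval_i^{\,t_\otime})^{\top}\otimes \empK_{:,p_m}$, the vector of time-powers of $\eval_i$ tensored with the $p_m$-th column of the kernel matrix.

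The core of the argument is then elementary linear algebra. These $g$ columns of $\Obs_{\Tset}$ share a common left Kronecker factor, so any linear relation $\sum_m c_m \empK_{:,p_m}=0$ among the $g$ kernel-matrix columns (which live in $\R^{\nsamp}$) lifts to the relation $\sum_m c_m\big[(\eval_i^{\,t_1},\dots,\eval_i^{\,t_\otime})^{\top}\otimes \empK_{:,p_m}\big]=0$ among columns of $\Obs_{\Tset}$. Hence full column rank of $\Obs_{\Tset}$ forces $\empK_{:,p_1},\dots,\empK_{:,p_g}$ to be independent in $\R^{\nsamp}$, which is impossible unless $\nsamp\geq g$. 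Maximizing over all eigenvalues gives the necessary condition $\nsamp\geq \max_{1\leq i\leq \nevals}\geomMult(\eval_i)=\minmeas$, the cyclic index of $A$; note this holds for every positive-definite kernel, every choice of sensing locations, and independently of $\otime$. For tightness --- exhibiting a kernel for which $\minmeas$ sensors already suffice --- I would take, e.g., a Gaussian kernel with the $\minmeas$ sensing locations and the centers $\shCent$ in general position, so that within each eigenvalue the relevant $\minmeas$ columns of $\empK$ are linearly independent; combined with the distinct-time Vandermonde structure in $\eval_i^{\,t_j}$ already used in Proposition~\ref{prop:1}, this restores full column rank of $\Obs_{\Tset}$. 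The explicit construction realizing the bound is the content of Proposition~\ref{prop:3}, so in the proof of this proposition I would establish the lower bound and only indicate why it is attained.

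The main difficulty I anticipate is bookkeeping rather than conceptual. One must check that it is legitimate to restrict attention to the leading eigenvectors of the blocks sharing an eigenvalue --- it is, since a dependence among any subset of the $\ncent$ columns already rules out full column rank --- and that the non-leading generalized-eigenvector columns, whose images under $\empK A^{t_j}$ have the upper-triangular Toeplitz structure appearing in the proof of Proposition~\ref{prop:1}, need no separate treatment for the lower bound. A second point to handle cleanly is which hypotheses inherited from Proposition~\ref{prop:1} (shadedness of $\empK$, nonvanishing of the summed row $\empK^D$, distinctness of the $t_i$, $|\Tset|\geq\ncent$) are actually used: they are what makes the bound simultaneously necessary and achievable once $\nsamp\geq\minmeas$ and the kernel is generic, but the necessity half requires none of them, and I would state that separation explicitly.
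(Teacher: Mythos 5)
Your proof is correct, but it takes a genuinely different route from the paper. The paper works with a Gaussian kernel and $\minmeas-1$ shaded, linearly independent rows, isolates the columns of $\Obs_{\Tset}$ attached to the leading entries of the $\minmeas$ Jordan blocks of a maximally repeated eigenvalue, and then invokes Micchelli's invertibility theorem for Gaussian interpolation matrices to exhibit an explicit solution of the $(\minmeas-1)\times(\minmeas-1)$ linear system expressing one such column in terms of the others, hence a rank deficiency; this kernel-specific construction is what matches the ``there exist kernels'' phrasing of the statement. You instead observe that, after the similarity reduction (where, strictly, the relevant columns are those of $\empK\JorP$ rather than of $\empK$ --- a notational slip that does not affect the argument), the $\geomMult(\eval_i)$ columns of $\Obs_{\Tset}$ associated with the genuine eigenvectors of $\eval_i$ share the common factor $\evalvec_i$, so any dependence among the corresponding $\geomMult(\eval_i)$ vectors in $\R^{\nsamp}$ lifts to a column dependence of $\Obs_{\Tset}$; since $\nsamp<\geomMult(\eval_i)$ forces such a dependence by dimension count alone, you get $\nsamp\geq\minmeas$ for \emph{every} positive-definite kernel, every sensing set, and every $\Tset$, with tightness deferred to Proposition \ref{prop:3}. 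Your version is both more elementary and strictly stronger (universal rather than existential necessity), and it makes clear that the paper's appeal to Micchelli is not actually needed for mere rank deficiency --- solvability of the linear system is more than is required, since any nontrivial dependence suffices; what the Micchelli step buys the paper is only the explicit identification of which column is redundant for the Gaussian kernel. Your explicit separation of which hypotheses of Proposition \ref{prop:1} are used for necessity (none) versus attainability is also a clarification the paper does not make.
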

\begin{proof}
 We first prove the lower bound. WLOG, let $\empKShadFull$ have $\minmeas-1$ fully shaded, linearly independent rows, and write it as 
 \begin{align*}
  \empKShadFull &= \begin{bmatrix}
                    k_{11} & k_{12} & \cdots & k_{1\ncent} \\
                    \vdots & \vdots & \cdots & \vdots \\
                    k_{(\minmeas-1)1} & k_{(\minmeas-1)2} & \cdots & k_{(\minmeas-1)\ncent} 
                   \end{bmatrix}.
 \end{align*}
 Since the cyclic index is $\minmeas$, this implies that at least one eigenvalue, say $\eval$, has $\minmeas$ Jordan 
 blocks. 
%  For concreteness, suppose $\minmeas=2$, and let the size of the blocks be $i\times i$ and $j\times j$. 
 Define indices $j_1, j_2, \dots, j_{\minmeas} \in \{1,2,\dots,\ncent\}$ as the columns corresponding to the leading entries of the $\minmeas$ Jordan blocks corresponding to $\eval$. WLOG, let $j_1 = 1$.
 Using ideas similar to the last proof, we can write the observability matrix as
 \begin{align*}
  \Obs_{\Tset}
   &:= 
   \begin{bmatrix}
    \kernel_{11}\eval^{t_1}  & \cdots & \kernel_{1j_{\minmeas}}\eval^{t_1} & \cdots\\
    \vdots & \ddots &\vdots & \ddots\\
    \kernel_{11}\eval^{t_{\otime}}  & \kernel_{1j_{\minmeas}}\eval^{t_{\otime}} & \cdots\\
    \vdots & \ddots & \vdots & \ddots\\
    \kernel_{(\minmeas-1)1}\eval^{t_1}  \cdots & \kernel_{(\minmeas-1)j_{\minmeas}}\eval^{t_1} & \cdots\\
    \vdots & \ddots &\vdots & \ddots\\
    \kernel_{(\minmeas-1)1}\eval^{t_{\otime}}  & \cdots & \kernel_{(\minmeas-1)j_{\minmeas}}\eval^{t_{\otime}} & \cdots
   \end{bmatrix}.
 \end{align*}
 Define $\evalvec:= \begin{bmatrix}\eval^{t_1} & \eval^{t_2} & \cdots \eval^{t_{\otime}}\end{bmatrix}^T$. 
 Then the above matrix becomes 
 \begin{align*}
  \Obs_{\Tset}
   &:= 
   \begin{bmatrix}
    \kernel_{11}\evalvec  & \cdots & \kernel_{1j_2}\evalvec & \cdots & \kernel_{1j_{\minmeas}}\evalvec & \cdots\\
    \vdots & \ddots & \vdots & \ddots &\vdots & \ddots\\
    \kernel_{(\minmeas-1)1}\evalvec  & \cdots & \kernel_{(\minmeas-1)j_2}\evalvec & \cdots & \kernel_{(\minmeas-1)j_{\minmeas}}\evalvec & \cdots
   \end{bmatrix}.
 \end{align*}
 We need to show that one of the columns above can be written in terms of the others. This is equivalent to solving the linear system
 \begin{align*}
  \begin{bmatrix}
   \kernel_{1j_1}\\
   \kernel_{2j_1}\\
   \vdots\\
   \kernel_{(\minmeas-1)j_1}
  \end{bmatrix}
%   &=
%   c_1
%   \begin{bmatrix}
%    \kernel_{1j_2}\\
%    \kernel_{2j_2}\\
%    \vdots\\
%    \kernel_{(\minmeas-1)j_2}
%   \end{bmatrix} 
%   + \cdots + c_{\minmeas-1}
%   \begin{bmatrix}
%    \kernel_{1j_{\minmeas}}\\
%    \kernel_{2j_{\minmeas}}\\
%    \vdots\\
%    \kernel_{(\minmeas-1)j_{\minmeas}}
%   \end{bmatrix}\\
  &=
  \begin{bmatrix}
   \kernel_{1j_2} & \cdots & \kernel_{1j_{\minmeas}}\\
   \kernel_{2j_2} & \cdots & \kernel_{2j_{\minmeas}}\\
   \vdots & \ddots & \vdots\\
   \kernel_{(\minmeas-1)j_2} & \cdots & \kernel_{(\minmeas-1)j_{\minmeas}}\\
  \end{bmatrix} 
  \begin{bmatrix}
   c_1\\
   c_2\\
   \vdots\\   
   c_{(\minmeas-1)}
  \end{bmatrix}. 
 \end{align*}
 Suppose the kernel matrix on the RHS is generated from the Gaussian kernel. From \cite{micchelli1984interpolation}, 
 it's known that every principal minor of a Gaussian kernel matrix is invertible, which implies that $\Obs_{\Tset}$ cannot be observable. 
\end{proof}
We now prove a sufficient condition for the observability of a system with repeated eigenvalues, but with the condition that the Jordan blocks are trivial. 
\begin{proposition}\label{prop:3}
 Suppose that the conditions in Proposition \ref{prop:1} hold, with the relaxation that
 the Jordan blocks $\begin{bmatrix}\JorLa_1 & \JorLa_2 &\cdots & \JorLa_{\JorMul}\end{bmatrix}$ may have 
 repeated eigenvalues, and where $\JorLa_i$ are single-dimensional. Let $\minmeas$ be the cyclic index of $A$.
 We define 
 \begin{align}\eqlabel{empKShadFull}
  \empKShadFull = \begin{bmatrix}
                    \empK^{(1)}\\
                    \vdots\\
                    \empK^{(\minmeas)}
                  \end{bmatrix}
 \end{align}
 as the \emph{$\minmeas$-shaded matrix} which consists of $\minmeas$ shaded matrices with the property that any subset of
 $\minmeas$
 columns in the matrix are linearly independent from each
 other. Then system \eqref{k_measure} is observable if $\Tset$ has distinct values, and $|\Tset| \geq \ncent$.
\end{proposition}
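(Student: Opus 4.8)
The plan is to replay the architecture of the proof of Proposition~\ref{prop:1}, substituting at the critical step the finer ``any $\minmeas$ columns independent'' hypothesis for mere shadedness. Since every Jordan block $\JorLa_i$ is one-dimensional, $A$ is diagonalizable; as in Proposition~\ref{prop:1} I would first treat $A = \JorLa = \diag(\la_1,\dots,\la_{\ncent})$ (the eigenvalues nonzero by the retained full-rank hypothesis, but now possibly repeated) and then pass to $A = \JorP\JorLa\JorP^{-1}$ exactly as in the closing paragraph of that proof, using that right-multiplication of $\Obs_{\Tset}$ by the invertible block matrix $\boldsymbol{\JorP}^{-1}$ preserves column rank.

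With $\empK := \empKShadFull$, the $j$-th column of $\empK\JorLa^{t_l}$ is $\la_j^{t_l}$ times the $j$-th column $\empK_{:,j}$ of $\empK$, so the $j$-th column of $\Obs_{\Tset}$ is the vertical stack of $\la_j^{t_1}\empK_{:,j},\dots,\la_j^{t_{\otime}}\empK_{:,j}$. Let $\mu_1,\dots,\mu_{\nevals}$ be the distinct eigenvalues and $J_s := \{\,j:\la_j=\mu_s\,\}$; because $A$ is diagonalizable, $|J_s| = \geomMult(\mu_s)\le\minmeas$. Suppose $\Obs_{\Tset}c = 0$. Reading the block of this equation coming from time $t_l$ and grouping columns by eigenvalue gives, for each $l\in\{1,\dots,\otime\}$,
\[
 \sum_{s=1}^{\nevals}\mu_s^{t_l}\,v_s = 0 \ \text{ in }\R^{\nsamp},\qquad v_s := \sum_{j\in J_s} c_j\,\empK_{:,j}.
\]

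The core is then a two-stage collapse. Stage one: the $\otime\times\nevals$ matrix $[\mu_s^{t_l}]_{l,s}$ has full column rank $\nevals$ — the same Vandermonde/generalized-Vandermonde fact used for $\boldsymbol{\Phi}$ in Proposition~\ref{prop:1}, valid because the $\mu_s$ are distinct and $\otime=|\Tset|\ge\ncent\ge\nevals$ with the $t_l$ distinct — so the display forces $v_s = 0$ for every $s$. Stage two: for each $s$ we then have $\sum_{j\in J_s}c_j\,\empK_{:,j}=0$, a linear dependence among at most $\minmeas$ columns of the $\minmeas$-shaded matrix $\empKShadFull$, which are linearly independent by hypothesis; hence $c_j=0$ for all $j\in J_s$. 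Ranging over $s$ gives $c=0$, so $\Obs_{\Tset}$ has trivial kernel, $\Rank\,\Obs_{\Tset}=\ncent$, and the system \eqref{k_measure} is observable.

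I expect the Vandermonde step to need the most care: for arbitrary distinct sampling times $\Tset$ (not $\{0,1,\dots,\otime-1\}$) and possibly negative eigenvalues one must verify $[\mu_s^{t_l}]$ is genuinely full rank; I would argue it for consecutive $t_l$, where it is a classical Vandermonde determinant, and note that $\Tset$ in general position suffices. A secondary subtlety is the reduction from $A=\JorP\JorLa\JorP^{-1}$ to the diagonal case: one should be sure the ``$\minmeas$ columns independent'' property is really being used on the measurement matrix expressed in the eigenbasis of $A$, equivalently that $\empK$ is injective on each eigenspace of $A$ (this reduces to the coordinate-subspace condition precisely when $A$ is already diagonal), which is exactly the point glossed in the last lines of Proposition~\ref{prop:1}.
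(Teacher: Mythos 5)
Your proof is correct in substance and uses the same ingredients as the paper's, but it is a more complete rendering of the argument. The paper's own proof diagonalizes implicitly, writes the columns of $\Obs_{\Tset}$ as $k_{ij}\evalvec_j$, and then only checks that the $\minmeas$ columns attached to a single repeated eigenvalue cannot satisfy a dependence relation, because the corresponding columns of $\empKShadFull$ are independent by hypothesis; the separation between columns carrying \emph{distinct} eigenvalues is left implicit (inherited from the Vandermonde-type step in Proposition \ref{prop:1}). Your two-stage null-vector argument makes exactly this implicit part explicit: grouping a putative null vector by eigenvalue, killing the cross-eigenvalue mixing with the full column rank of $[\mu_s^{t_l}]$, and then killing each within-eigenvalue combination $\sum_{j\in J_s}c_j\empK_{:,j}=0$ using the $\minmeas$-column-independence (noting $|J_s|\le\minmeas$ since $\minmeas$ is the cyclic index). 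What your version buys is that it rules out arbitrary dependencies across all $\ncent$ columns rather than only among the columns of one repeated eigenvalue, which is what observability actually requires. The two caveats you flag are genuine but are weaknesses of the paper's statements rather than of your argument: the generalized Vandermonde matrix $[\mu_s^{t_l}]$ can indeed be singular for arbitrary distinct $\Tset$ when eigenvalues are negative (e.g.\ eigenvalues $1,-1$ with $\Tset=\{0,2\}$), so strictly the claim needs consecutive times or a genericity condition on $\Tset$ that the paper also omits in Proposition \ref{prop:1}; and the passage from $\JorLa$ to $A=\JorP\JorLa\JorP^{-1}$ really does require the shadedness/column-independence hypotheses to hold for $\empK\JorP$ (equivalently injectivity of the measurement map on each eigenspace), a point the paper glosses in the same way. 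Neither issue is introduced by your proof, and within the paper's own conventions your argument goes through.
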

\begin{proof}
 A cyclic index of $\minmeas$ for this system implies that there exists an eigenvalue $\eval$ that's repeated $\minmeas$
 times. WLOG, let $\empKShadFull$ have $\minmeas$ fully shaded, linearly independent rows, and, assume that the column indices corresponding to this eigenvalue are $\{1,2,\dots,\minmeas\}$. Define 
 $\evalvec_i:= \begin{bmatrix}\eval_i^{t_1} & \eval_i^{t_2} & \cdots \eval_i^{t_{\otime}}\end{bmatrix}^T$. Then
 \begin{align*}
  \Obs_{\Tset}
   &:= 
   \begin{bmatrix}
    k_{11} \evalvec_1 & k_{12} \evalvec_2 & \cdots & k_{1\ncent} \evalvec_{\ncent}\\
    \vdots & \vdots & \ddots & \vdots\\
    k_{\minmeas 1} \evalvec_1 & k_{\minmeas 2} \evalvec_2 & \cdots & k_{\minmeas \ncent} \evalvec_{\ncent}.
   \end{bmatrix}
 \end{align*}
 Let $\evalvec_1 = \evalvec_2 = \cdots \evalvec_{\minmeas} := \evalvec$. 
 Focusing on these first $\minmeas$ columns of this matrix, this implies that
 we need to find constants $c_1,c_2,\dots, c_{\minmeas-1}$ s.t.
 \begin{align*}
  \begin{bmatrix}
   k_{11}\\
   \vdots\\
   k_{\minmeas 1}
  \end{bmatrix}
  &= 
  c_1 
  \begin{bmatrix}
   k_{12}\\
   \vdots\\
   k_{\minmeas 2}
  \end{bmatrix}
  + \cdots + 
  c_{\minmeas-1} 
  \begin{bmatrix}
   k_{1\minmeas}\\   
   \vdots\\
   k_{\minmeas \minmeas}. 
  \end{bmatrix}
 \end{align*}
 However, these columns are linearly independent by assumption, and thus no such constants exist, implying that $\Obs_{\Tset}$ is observable. 
\end{proof}
An example of a kernel such that any subset of $\minmeas$ columns in $\empKShadFull$ are linearly independent of each other is the Gaussian kernel evaluated on sampling locations $\sampSetLong$, where $x_i\in\dom\subset\R^d$, and $x_i\neq x_j$. 

We can reuse Propositions \ref{prop:1}, \ref{prop:2}, and \ref{prop:3} to prove kernel controllability results, because the structure of the control matrix $\empKCD$ in \eqref{k_measure_c} is also that of a kernel matrix.

\begin{figure}[t]
\centering
 \subfigure[Shaded kernel matrix (see Definition \ref{def:shaded})]{
 \includegraphics[width=0.4\columnwidth]{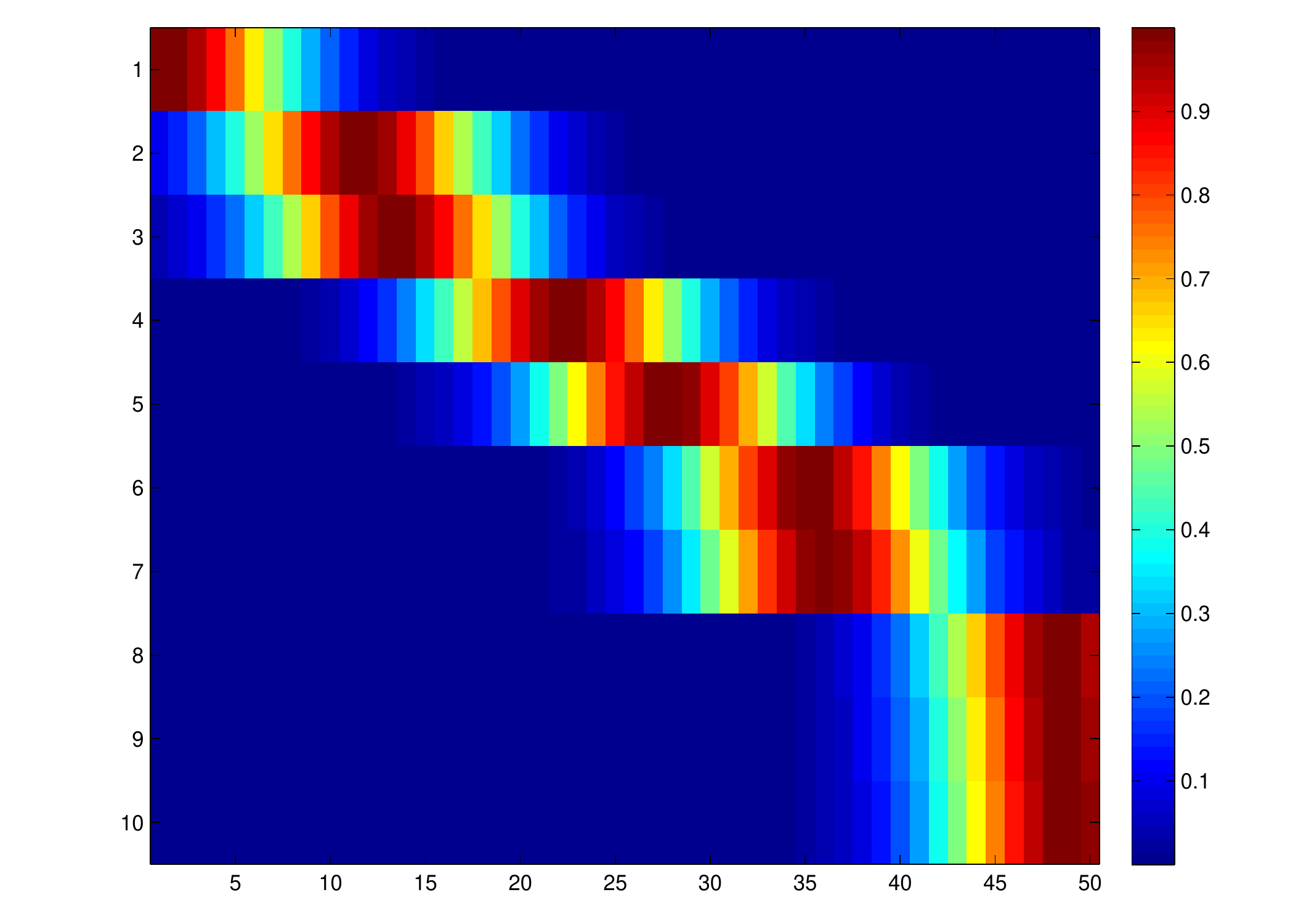} 
 \label{fig:shaded}}
 \quad
 \subfigure[2-shaded kernel matrix (see \eqref{empKShadFull})]{
 \includegraphics[width=0.4\columnwidth]{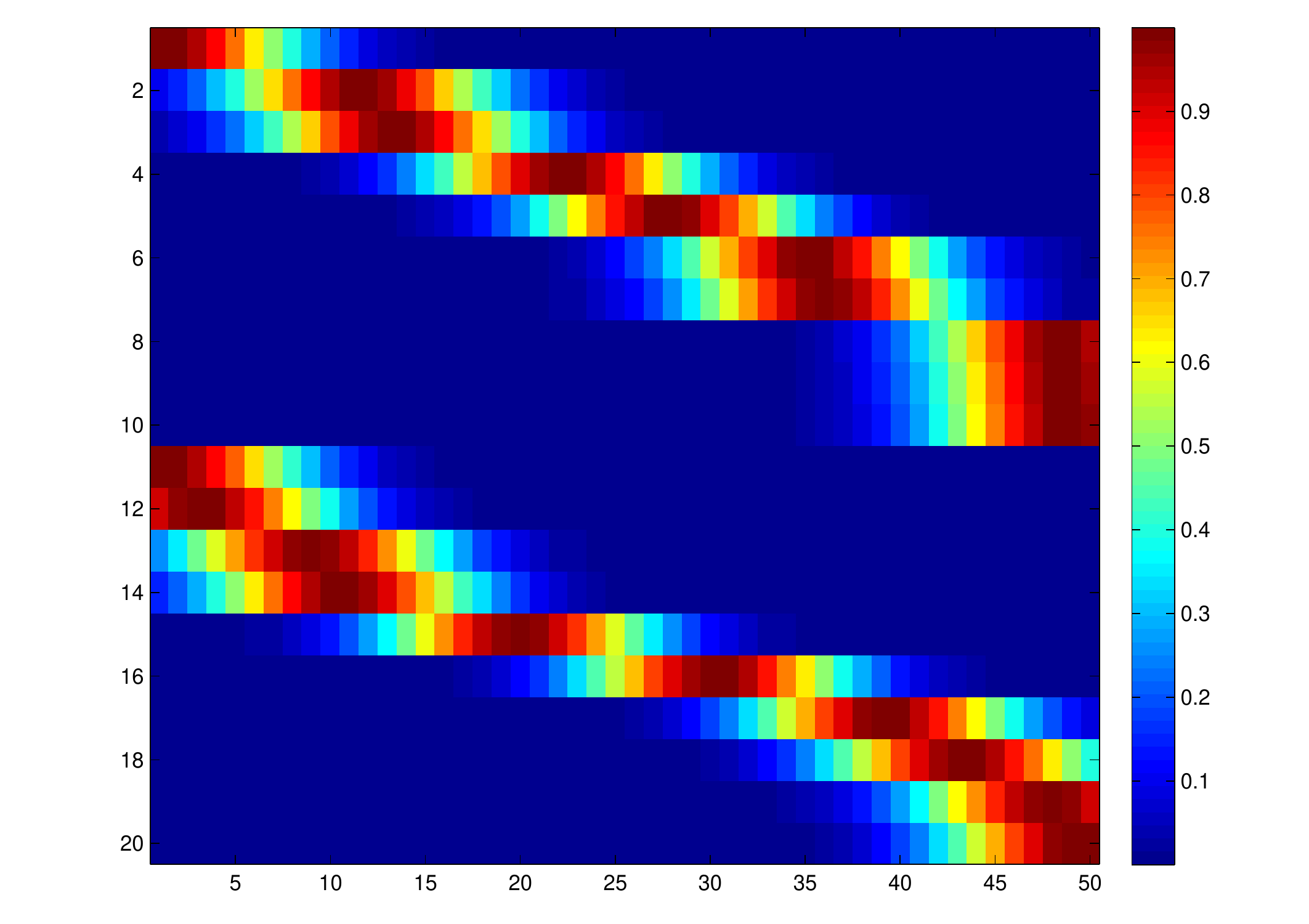} 
 \label{fig:shaded_multiple}}
 \caption{Pictorial representations of shaded kernel matrices.}
\label{shaded_matrix}
\vspace{-0.2in}
\end{figure}
\begin{figure}[t!]
\centering
\begin{algorithm}[H]
   \caption{Kernel Observer (Transition Learning)}
   \label{alg:egp_trans}
\begin{algorithmic}
\begin{footnotesize}
   \STATE {\bfseries Input:} Kernel $\kernel$, basis points $\shCent$, final time 
    step $T_f$. 
   \WHILE{$k \leq T_f$}
   \STATE $1)$ Sample data $\{y^i_k\}_{i=1}^{\nsamp}$ from $f(x,k)$. 
   \STATE $2)$ Estimate $\estweight_k$ via standard kernel inference procedure. 
   \STATE $3)$ Store weights $\estweight_k$ in matrix $\W\in\R^{\fdim\times T_f}$.
   \ENDWHILE
   \STATE Infer $\estsysop$ using method of choice (e.g. matrix least squares).
          Compute the covariance matrix  $\estcontrolop$ of the observed 
          weights $\W$. 
  \STATE {\bfseries Output:} estimated transition matrix $\estsysop$, predictive covariance    
         matrix $\estcontrolop$. 
\end{footnotesize}
\end{algorithmic}
\end{algorithm}
\vspace{-0.2in}
\begin{algorithm}[H]
   \caption{Kernel Observer (Estimation and Prediction)}
   \label{alg:egp_inf}
\begin{algorithmic}
\begin{footnotesize}
   \STATE {\bfseries Input:} Kernel $\kernel$, basis points $\shCent$, 
   estimated system matrix $\estsysop$, estimated covariance matrix $\estcontrolop$.
   \STATE {\bfseries Compute Observation Matrix:} Compute the cyclic index $\minmeas$ of $\estsysop$, and compute \eqref{empKShadFull}, by possibly iterating over $\sampSet = \sampSetLong$. 
   \STATE {\bfseries Initialize Observer:} Use $\estsysop$, $\estcontrolop$, and $\empKShadFull$ to initialize a state-observer (e.g. Kalman filter (KF)) on $\fspaceC$.
   \WHILE{ measurements available }   
     \STATE 1) Sample data $\{y^i_k\}_{i=1}^{\nsamp}$ from $f(x,k)$.     
     \STATE 2) Propagate KF estimate $\estweight_{k+1}$ 
               forward to time $t_f$, correct using measurement feedback with $\{y^i_k\}_{i=1}^{\nsamp}$. 
     \STATE 3) Output predicted function $\widehat{f}(x,k+1)$ and predictive covariance of KF.
   \ENDWHILE   
\end{footnotesize}
\end{algorithmic}
\end{algorithm}
\vspace{-0.2in}
\end{figure}
\begin{figure}[t]
\begin{algorithm}[H]
   \caption{Kernel Controller}
   \label{alg:egp_control}
\begin{algorithmic}
\begin{footnotesize}
   \STATE {\bfseries Input:} Kernel $\kernel$, basis points $\shCent$,
   estimated system matrix $\estsysop$, estimated covariance matrix $\estcontrolop$, and function $f_{\text{ref}}$ 
   to drive initial function to.    
   \STATE {\bfseries Initialize Observer:} (see Algorithm \ref{alg:egp_inf}). 
   \STATE {\bfseries Initialize Controller:} Use Jordan decomposition of $\estsysop$ to obtain  control locations $\controlCent$, compute kernel matrix $\empKCD\in\R^{\ncontrol\times\ncent}$ between $\controlCent$ and $\shCent$, and initialize controller (e.g. LQR) utilizing $(\estsysop, \estcontrolop)$.
   \WHILE{ measurements available }   
     \STATE 1) Sample data $\{y^i_k\}_{i=1}^{\nsamp}$ from $f(x,k)$.    
     \STATE 2) Utilize observer to estimate $\estweight_{k+1}$.
     \STATE 3) Use $\estweight_{k+1}$ and $f_{\text{ref}}$ as input to controller to get feedback. 
   \ENDWHILE   
\end{footnotesize}
\end{algorithmic}
\end{algorithm}
\vspace{-0.2in}
\end{figure}

\section{Experimental Results}\label{sec:experimental_results}
We report experimental results on controlling synthetic and modeling real-world data. 
All experiments were performed using MATLAB on a laptop running Ubuntu 14.04 with $8$ GB of RAM, and an Intel core i7 processor. 

\subsection{Prediction of global ocean surface temperature} %with AVHRR Satellite data}
We first analyzed the feasibility of this modeling approach on a large dataset: the $4$ km AVHRR Pathfinder project, which is a satellite monitoring global ocean surface temperature. This data was obtained from the National Oceanographic Data Center. 
The data consists of longitude-latitude measurements on a 2D domain $\dom\subset[-180,180]\times[-90,90]$;
this dataset is challenging, with measurements at over $37$ million coordinates, and several missing pieces of data. The goal was to learn the day and night temperature models $f_k(x,y)\in\fspaceC$, where $\fspaceC$ was generated using the Gaussian kernel $\kernel(x,y) = e^{-(\|x-y\|^2/2\s^2)}$. 
We first did a search for the ideal bandwidth $\s$ for a $304$-dimensional sparse Gaussian process model with a Gaussian kernel. The set of atoms $\Atoms$ was determined through a linear independence test based sparsification algorithm \cite{csato2002sparse}. % over day and night separately. 
Once the parameters were chosen, a budgeted GP was learned for each date, resulting in weight vectors $\weight_i, \ i\in\{1,2,\dots,365\}$.  
We used Algorithm \ref{alg:egp_trans} to infer $\estsysop$, and applied Algorithm \ref{alg:egp_inf} with $\nsamp \in \{280,500,1000,2000\}$ chosen randomly in the $\dom$ to track the system state given a random initial condition $\weight_0$. 
Figures \ref{fig:pathfinder_errors_boxplots_day} and \ref{fig:pathfinder_errors_boxplots} show a comparison of the deviation in percentage of the estimated values from the real data, averaged over all the days. 
As can be seen, the observer enables the prediction of functional evolution \textit{without needing all the measurements (37 million)}, and performance comparable to sampling over all locations is obtained with sampling only over $2,000$ locations. Note that here, even though the system model is observable at $\nsamp=280$, since the dynamics are not truly linear in $\fspaceC$, we get better performance with more sampling locations.  
Finally, \ref{fig:pathfinder_tr_times_boxplots_day} and \ref{fig:pathfinder_tr_times_boxplots} show that the time required to estimate the state  % over the entire data-set of over $36$ million points is on
during function tracking with kernel observer are an order of magnitude better than retraining the model every time step (``original'' in the figure), with comparable performance.
\begin{figure}[tbh] %{r}{0.5\textwidth}
    \centering
    \subfigure[Pathfinder raw data on a fixed daty]{
    \includegraphics[width=0.40\columnwidth]{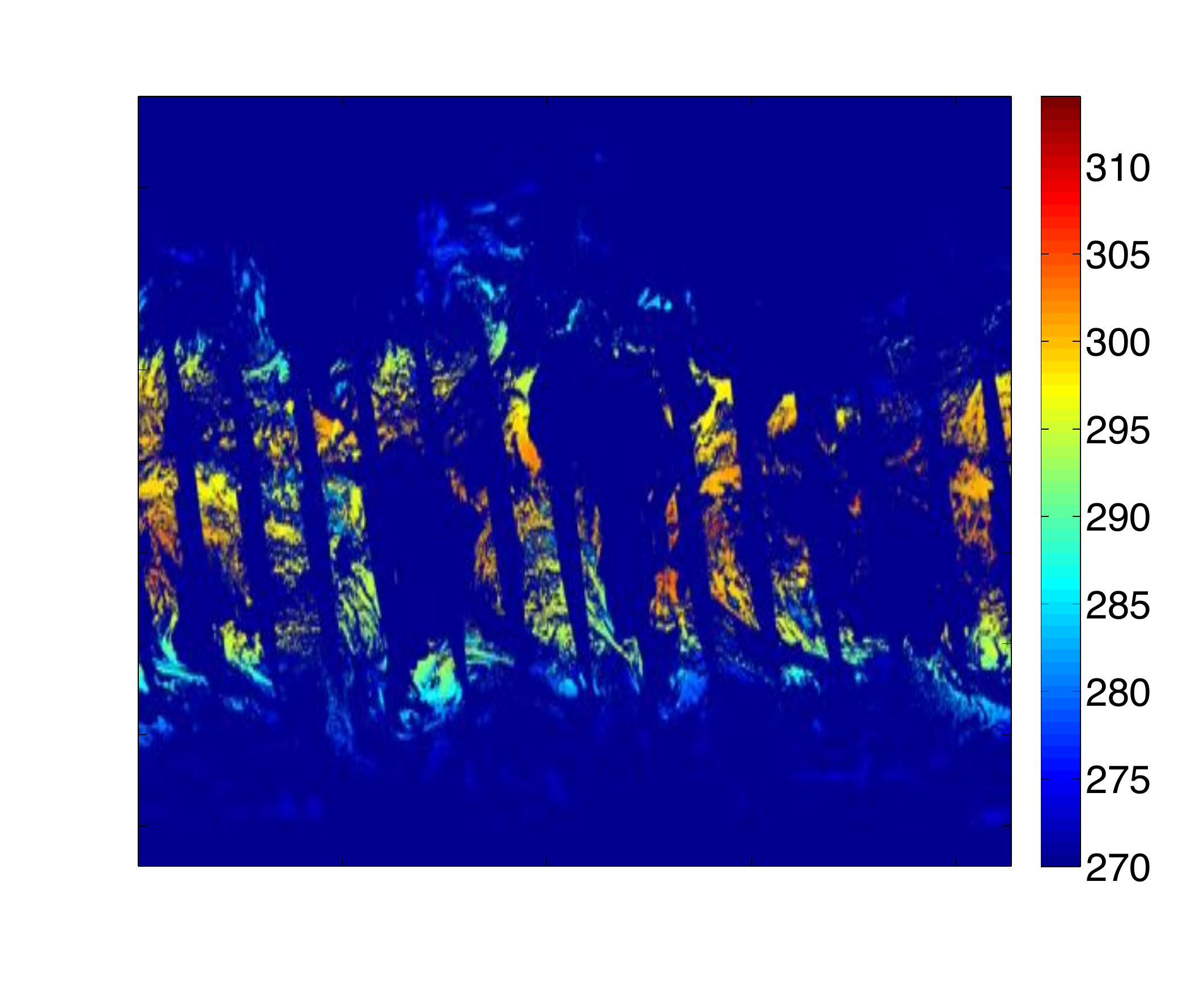} 
    \label{fig:pathfinder_raw_data} }
    \quad
    \subfigure[Pathfinder kernel observer estimate]
    {
    \includegraphics[width=0.40\columnwidth]{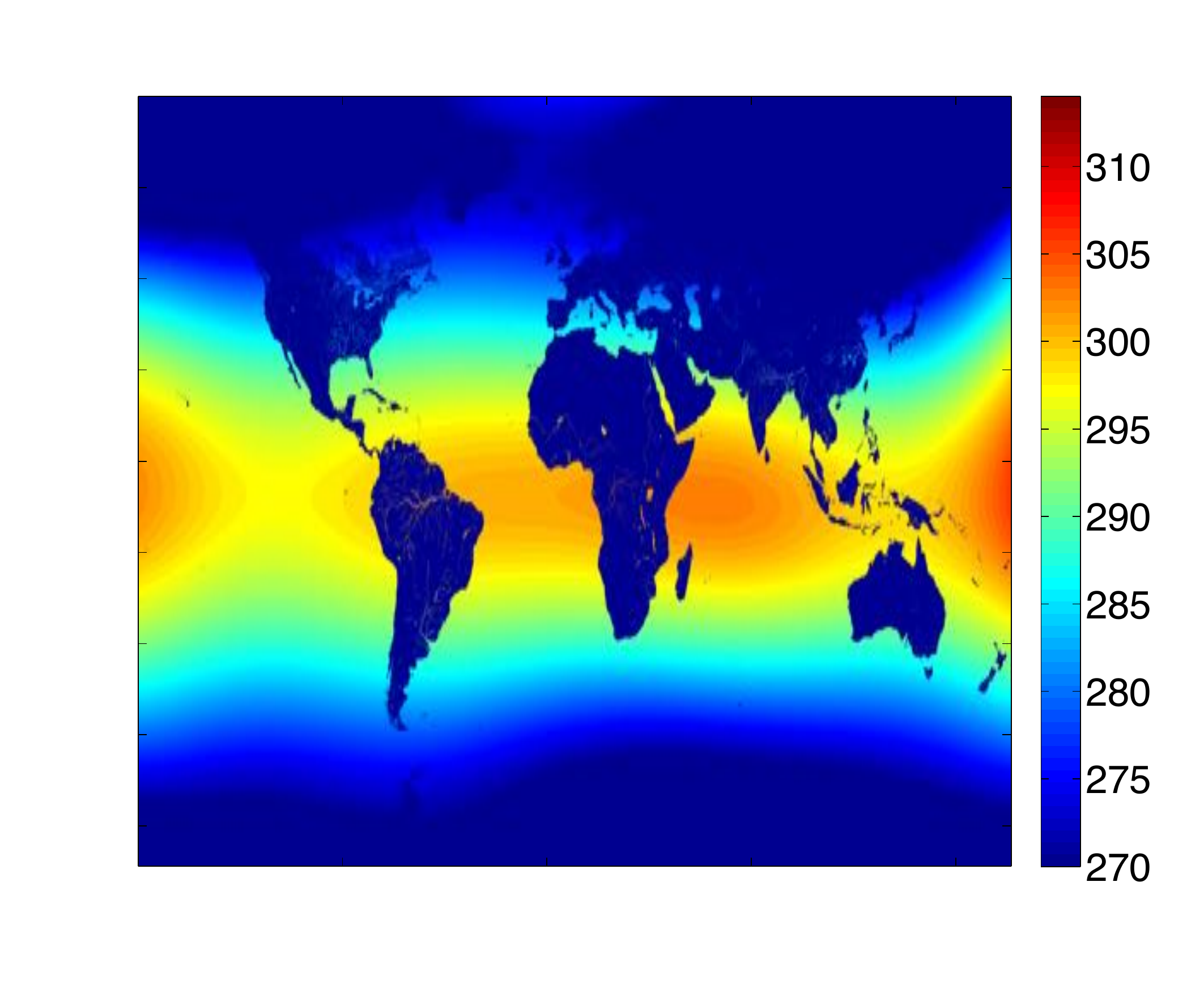} 
    \label{fig:pathfinder_ko_est}}
    \caption{Pathfinder raw data and kernel observer estimate, computed on data from $05/01/2012$.}     
        \vspace{-0.2in}
\end{figure} 
\begin{figure}[tbh] %{r}{0.5\textwidth}
    \centering
    \subfigure[Estimation error (day)]{
    \includegraphics[width=0.40\columnwidth]{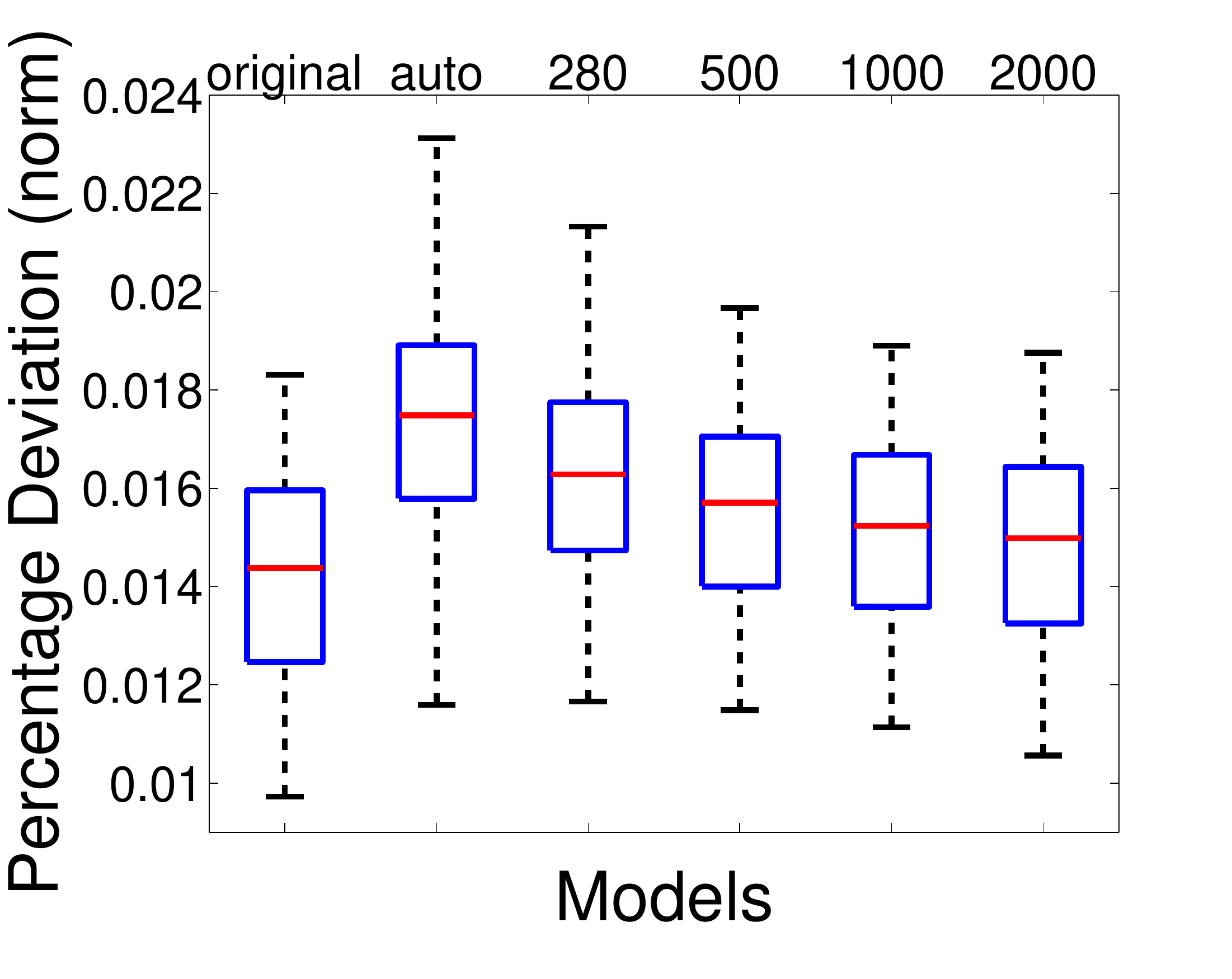} \label{fig:pathfinder_errors_boxplots_day} } %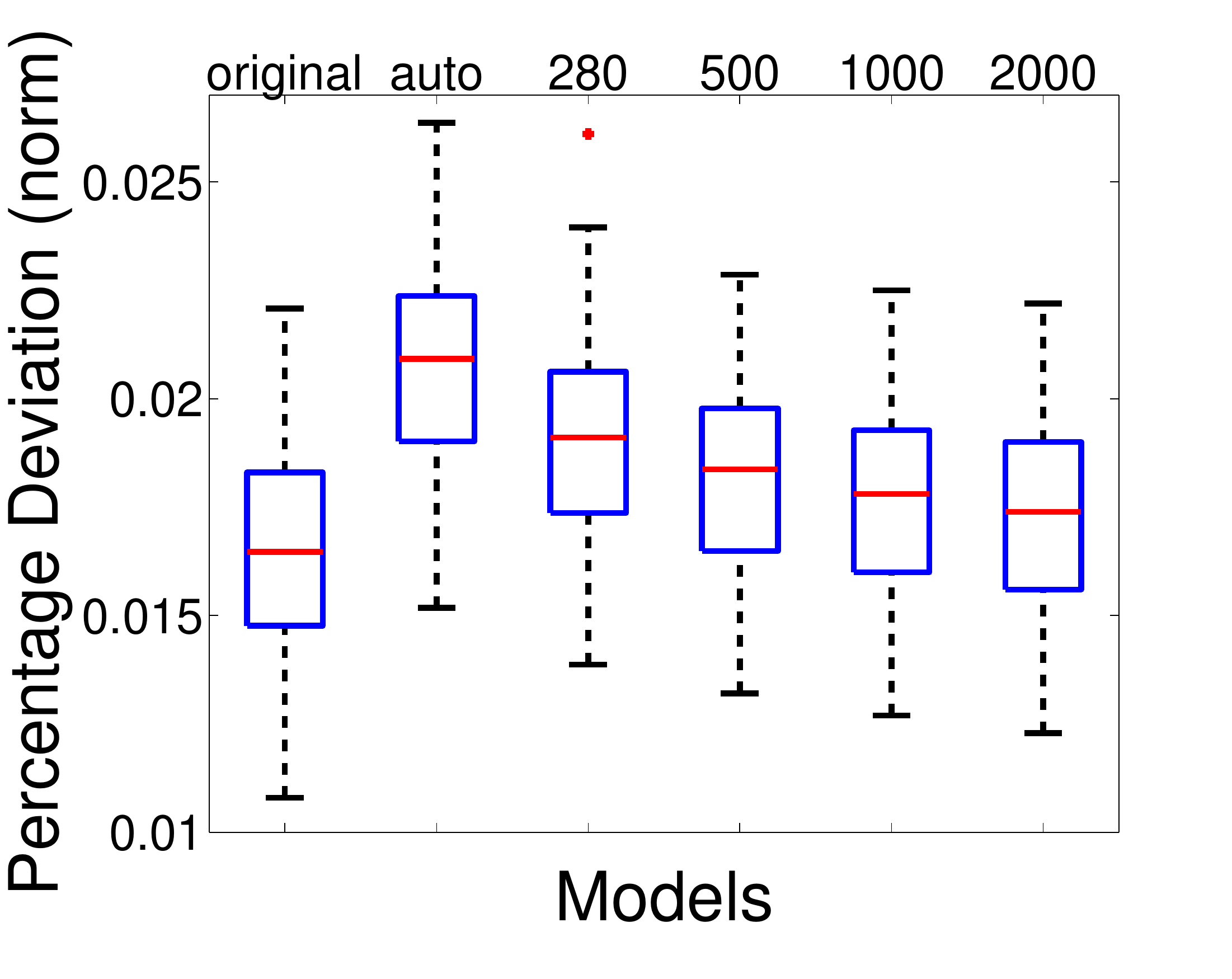
    \subfigure[Estimation time (day)]
    {
    \includegraphics[width=0.40\columnwidth]{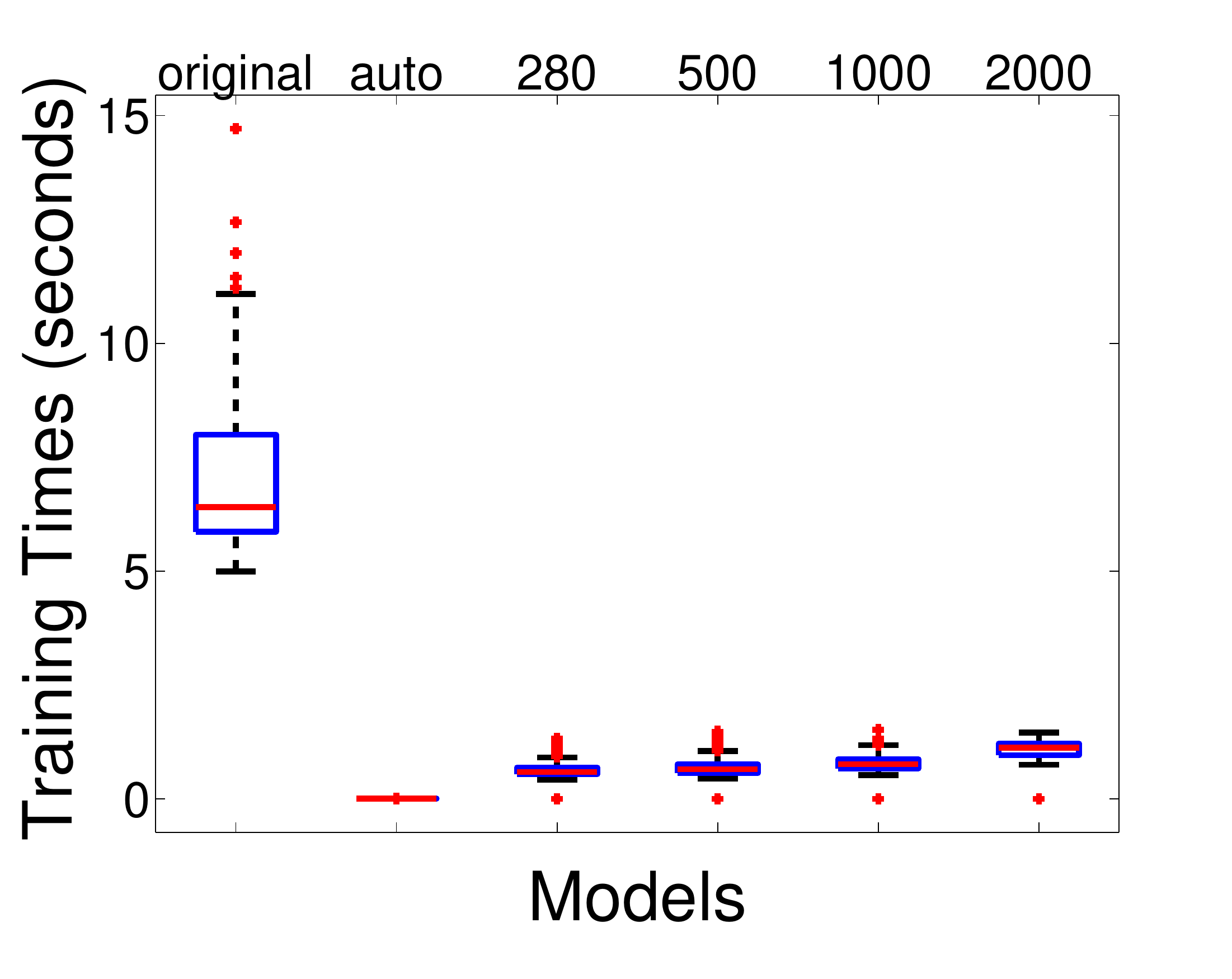} \label{fig:pathfinder_tr_times_boxplots_day}}
    \\
    \subfigure[Estimation error (night)]{
    \includegraphics[width=0.40\columnwidth]{pathfinder_errors_boxplot_2012_all_models_night} \label{fig:pathfinder_errors_boxplots} } %pathfinder_errors_boxplot_2012_all_models_night.pdf
    \subfigure[Estimation time (night)]
    {
    \includegraphics[width=0.40\columnwidth]{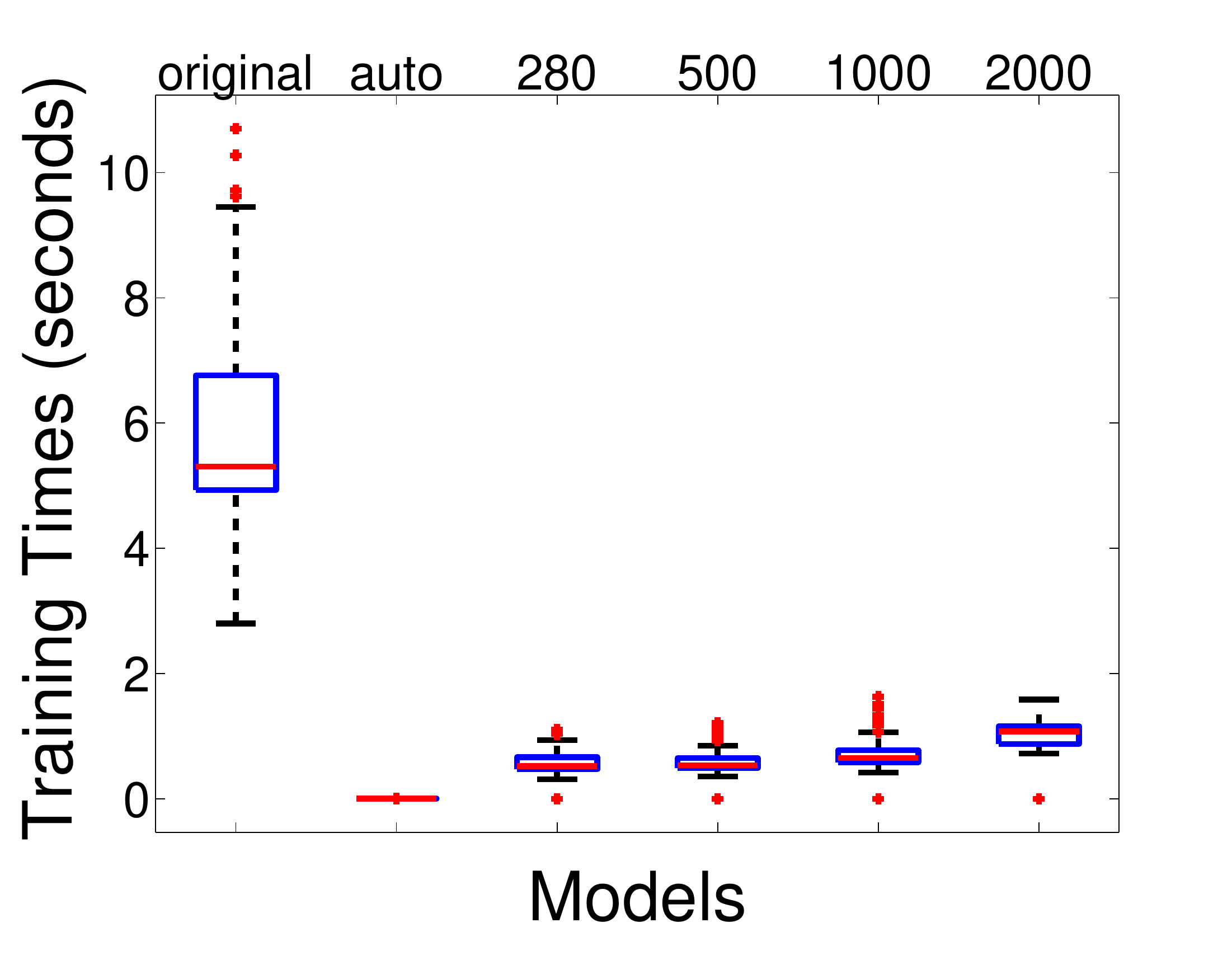} \label{fig:pathfinder_tr_times_boxplots}}
    \caption{Performance of kernel observer over Pathfinder satellite 2012 data with different numbers of observations.}     
        \vspace{-0.2in}
\end{figure}

\subsection{Control of a linear PDE} %with AVHRR Satellite data}
We then employed kernel controllers for controlling an approximation to the scalar diffusion equation $u_t = bu_{xx}$ on the domain $\dom=[0,1]$, with $b=0.25$. The solution to this equation is infinite-dimensional, so we chose a kernel $\kernel(x,y) = e^{-(\|x-y\|^2/2\s^2)}$, and a set of atoms  $\Atoms=\shCentLong$, $c_i\in\dom$, with $\ncent = 25$ generating $\fspaceC$, the space approximating $\fspace$, and another set of atoms $\AtomsControl=\{\fmap(d_1),\dots,\fmap(d_{\ncontrol})\}$, $d_j\in\dom$,
$\ncontrol=13$, generating the control space $\fspaceD$. The number of, and the location of the observations was chosen to be the same as that of the actuation locations $d_j$. First, tests (not reported here) were conducted to ensure that the solution to the diffusion equation is well approximated in $\fspaceC$. Algorithm \ref{alg:egp_trans} was then used to infer $\estsysop$. Figure \ref{fig:uncontrolled_pde} shows an example of an initial function $f_{\text{init}}$ evolving according to the PDE.  A reference function $f_{\text{ref}}\in\fspaceC$ was chosen to drive $f_{\text{init}}$ to $f_{\text{ref}}$ under the action of the PDE. Finally, Algorithm \ref{alg:egp_control} was used to control the PDE. Figure \ref{fig:controlled_pde} shows $f_{\text{init}}$ being driven to $f_{\text{ref}}$, while Figure \ref{fig:controlled_pde_error} shows the absolute value of the error between $f_k $ and $f_{\text{ref}}$ as a function of time. 
\begin{figure}[tbh] %{r}{0.5\textwidth}
    \centering
    \subfigure[Evolution of initial function $f_{\text{init}}$ according to diffusion equation.]{
    \includegraphics[width=0.31\columnwidth]{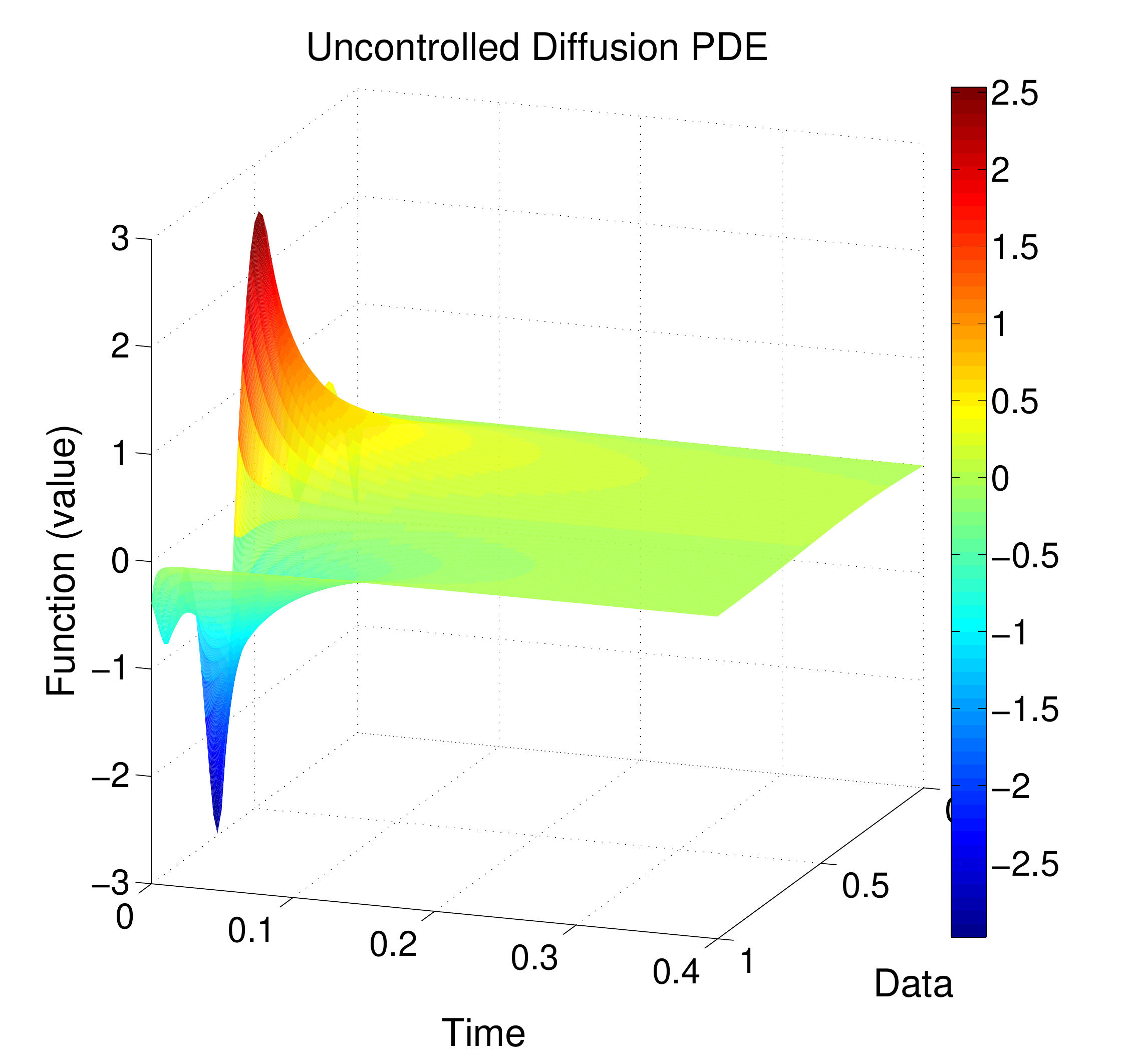} 
    \label{fig:uncontrolled_pde}}
    \subfigure[Initial function $f_{\text{init}}$ driven to $f_{\text{ref}}$ using kernel controller.]
    {
    \includegraphics[width=0.31\columnwidth]{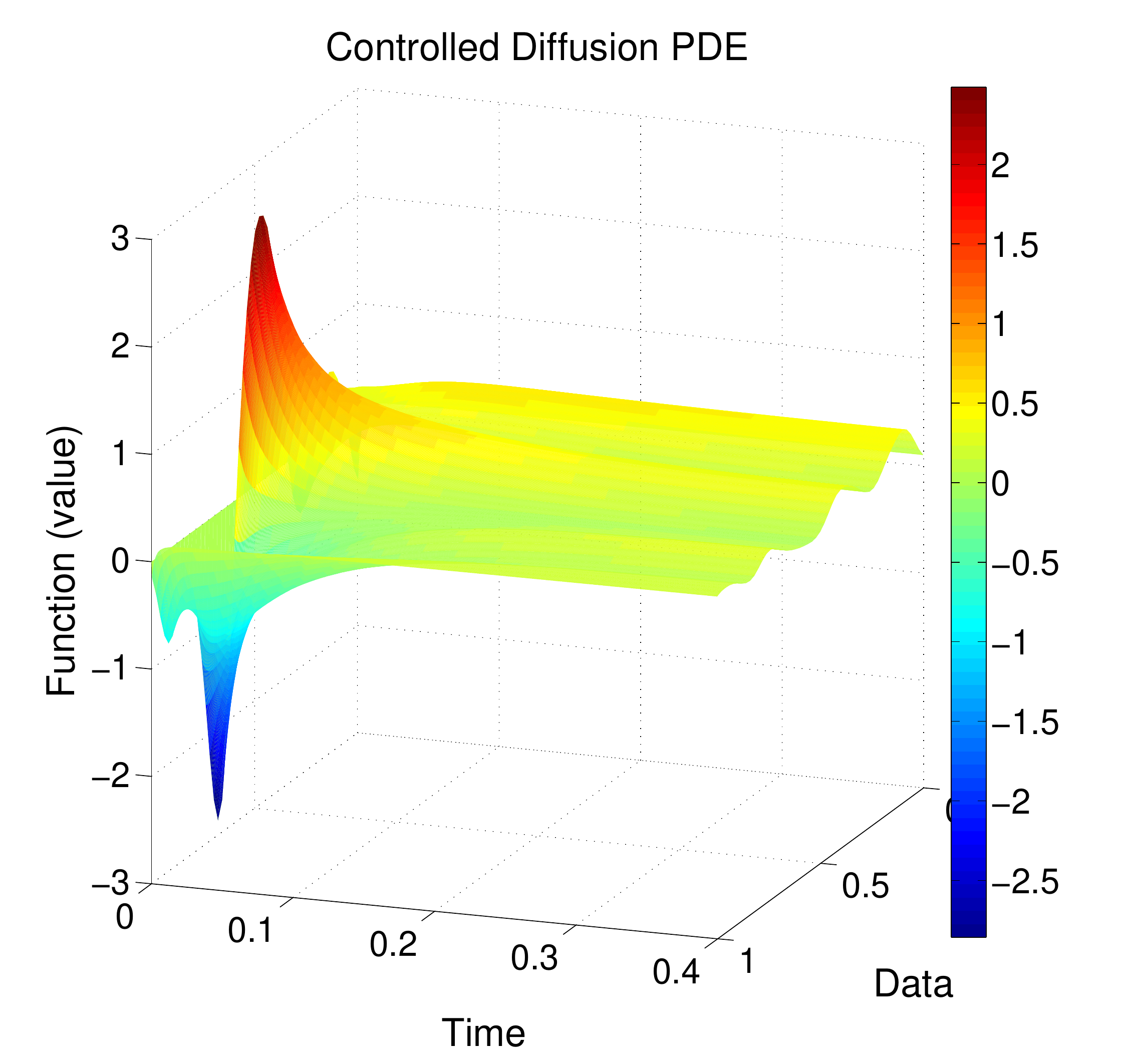} 
    \label{fig:controlled_pde}}    
    \subfigure[Error in absolute value between controlled pde and $f_{\text{ref}}$. ]{
    \includegraphics[width=0.31\columnwidth]{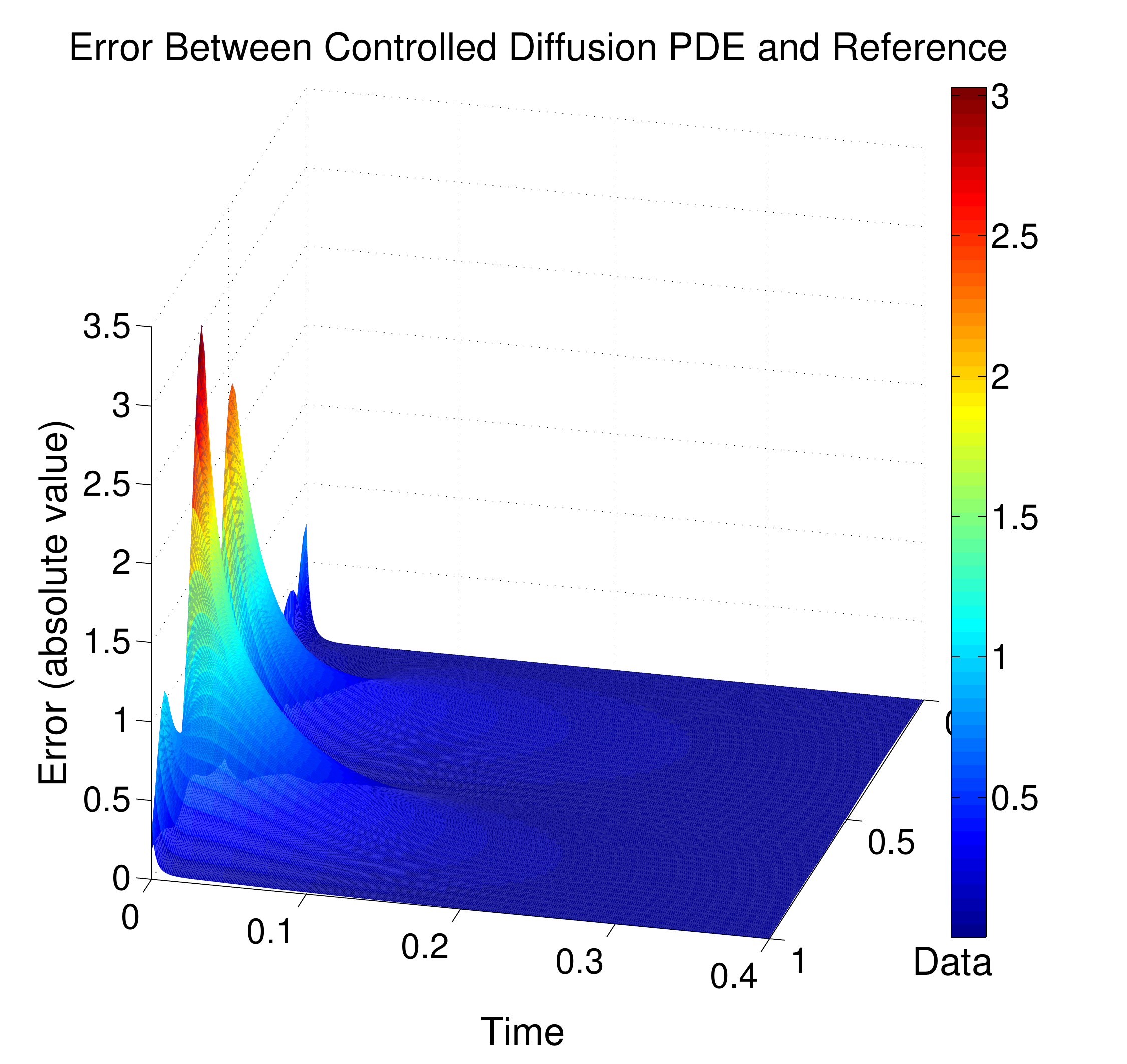} 
    \label{fig:controlled_pde_error} }
    \caption{Demonstration of the control of a linear diffusion equation.}     
    \vspace{-0.3in}
\end{figure} 

\section{Conclusions}
In this paper we presented a systems theoretic approach to the problem of modeling, estimating, and controlling complex spatiotemporally evolving phenomena. Our approach focused on developing a predictive model of spatiotemporal evolution by layering a dynamical systems prior over temporal evolution of weights of a kernel model. The resulting model can approximate PDE evolution, while it has the form of a finite state linear dynamical system. 
The lower bounds on the number of sampling and actuation locations provided in this paper are non-conservative, as such they provide direct guidance in ensuring robust real-world sensor network and actuation matrix design that must also account for fault-tolerance and reliability considerations.


\begin{thebibliography}{10}

\bibitem{Alabau2012controlpde}
Brockett R. Glass O. Le Rousseau J. Zuazua E. Editors: Cannarsa Piermarco Coron
  Jean-Michel Alabau-Boussouira, F.
\newblock {\em Control of Partial Differential Equations}.
\newblock C.I.M.E. Foundation Subseries. Springer-Verlag.

\bibitem{baker2000finite}
James Baker and Panagiotis~D Christofides.
\newblock Finite-dimensional approximation and control of non-linear parabolic
  pde systems.
\newblock {\em International Journal of Control}, 73(5):439--456, 2000.

\bibitem{Chowdhary13_ACC2}
Girish Chowdhary, Hassan Kingravi, Jonathan~P. How, and Patricio Vela.
\newblock Bayesian nonparametric adaptive control of time varying systems using
  {G}aussian processes.
\newblock In {\em American Control Conference (ACC)}. IEEE, 2013.

\bibitem{cressie2011statistics}
Noel Cressie and Christopher~K Wikle.
\newblock {\em Statistics for spatio-temporal data}.
\newblock John Wiley \& Sons, 2011.

\bibitem{csato2002sparse}
Lehel Csató and Manfred Opper.
\newblock Sparse on-line gaussian processes.
\newblock {\em Neural Computation}, 14(3):641--668, 2002.

\bibitem{das2014nonstationary}
Moumita Das and Sourabh Bhattacharya.
\newblock Nonstationary, nonparametric, nonseparable bayesian spatio-temporal
  modeling using kernel convolution of order based dependent dirichlet process.
\newblock {\em arXiv preprint arXiv:1405.4955}, 2014.

\bibitem{garg2012AAAI}
Sahil Garg, Amarjeet Singh, and Fabio Ramos.
\newblock Learning non-stationary space-time models for environmental
  monitoring.
\newblock In {\em Proceedings of the Twenty-Sixth {AAAI} Conference on
  Artificial Intelligence, July 22-26, 2012, Toronto, Ontario, Canada.}, 2012.

\bibitem{johansson2014global}
Fredrik Johansson, Vinay Jethava, Devdatt Dubhashi, and Chiranjib
  Bhattacharyya.
\newblock Global graph kernels using geometric embeddings.
\newblock In {\em Proceedings of the 31st International Conference on Machine
  Learning (ICML-14)}, pages 694--702, 2014.

\bibitem{ma2003nonstationary}
Chunsheng Ma.
\newblock Nonstationary covariance functions that model space--time
  interactions.
\newblock {\em Statistics \& Probability Letters}, 61(4):411--419, 2003.

\bibitem{micchelli1984interpolation}
Charles~A Micchelli.
\newblock Interpolation of scattered data: distance matrices and conditionally
  positive definite functions.
\newblock In {\em Approximation Theory and Spline Functions}, pages 143--145.
  Springer Netherlands, 1984.

\bibitem{plagemann2008nonstationary}
Christian Plagemann, Kristian Kersting, and Wolfram Burgard.
\newblock Nonstationary gaussian process regression using point estimates of
  local smoothness.
\newblock In {\em Machine learning and knowledge discovery in databases}, pages
  204--219. Springer, 2008.

\bibitem{perez:13:gaussian}
Fernando P?rez-Cruz, Steven {Van Vaerenbergh}, Juan~Jos? Murillo-Fuentes,
  Miguel L?zaro-Gredilla, and Ignacio Santamaria.
\newblock Gaussian processes for nonlinear signal processing.
\newblock {\em arXiv preprint arXiv:1303.2823}, 2013.

\bibitem{RasmussenWilliams2005}
Carl~E. Rasmussen and Christopher K.~I. Williams.
\newblock {\em Gaussian Processes for Machine Learning}.
\newblock The MIT Press, December 2005.

\bibitem{ren2012coupled}
Chuan-Xian Ren, Dao-Qing Dai, and Hong Yan.
\newblock Coupled kernel embedding for low-resolution face image recognition.
\newblock {\em Image Processing, IEEE Transactions on}, 21(8):3770--3783, Aug
  2012.

\bibitem{scholkopf:bk:2002}
B.~Scholk\"opf and A.~Smola.
\newblock {\em Support Vector Machines, Regularization, Optimization, and
  Beyond}.
\newblock MIT press, Cambridge, MA, USA, 2002.

\bibitem{wikle2001kernel}
Christopher~K Wikle.
\newblock A kernel-based spectral approach for spatio-temporal dynamic models.
\newblock In {\em Proceedings of the 1st Spanish Workshop on Spatio-Temporal
  Modelling of Environmental Processes (METMA)}, pages 167--180, 2001.

\bibitem{wikle2002kernel}
Christopher~K Wikle.
\newblock A kernel-based spectral model for non-gaussian spatio-temporal
  processes.
\newblock {\em Statistical Modelling}, 2(4):299--314, 2002.

\bibitem{zhou:bk:96}
Kemin Zhou, John~C. Doyle, and Keith Glover.
\newblock {\em Robust and Optimal Control}.
\newblock Prentice Hall, Upper Saddle River, NJ, 1996.

\end{thebibliography}
\end{document}